\documentclass[12pt,letterpaper]{article}

\addtolength{\oddsidemargin}{-.89in}
\addtolength{\evensidemargin}{-.89in}
\addtolength{\textwidth}{1.7in}
\addtolength{\topmargin}{-.875in}
\addtolength{\textheight}{1.75in}

\usepackage{amsmath}                    
\usepackage{amssymb}                    
\usepackage{amsthm}                     
\usepackage{amsfonts}                   
\usepackage{graphicx} 
\usepackage[dvips]{color}

\theoremstyle{definition}
\newtheorem{dfn}{Definition}
\theoremstyle{plain}
\newtheorem{thm}[dfn]{Theorem}

\newtheorem{lemma}[dfn]{Lemma}

\newtheorem{fact}[dfn]{Fact}


\newsavebox{\fmbox}
\newenvironment{fmpage}[1]
{\begin{lrbox}{\fmbox}\begin{minipage}{#1}}     
{\end{minipage}\end{lrbox}\fbox{\usebox{\fmbox}}}

\begin{document}

\title{Efficient Generation $\epsilon$-close to  $G(n,p)$ and Generalizations}

\author{Antonio Blanca\thanks{ U.C. Berkeley,  Email:ablanca@eecs.berkeley.edu,
Supported in part by NSF-CCF-TF-0830683.}
\and Milena Mihail\thanks{Georgia Tech,  Email:mihail@cc.gatech.edu
Supported in part by NSF-CCF-TF-0830683.}}

\date{}

\maketitle
\thispagestyle{empty} 

\begin{abstract}

We give an efficient algorithm to generate a graph from a distribution $\epsilon$-close to $G(n,p)$, in the sense of total variation distance. In particular, if $p$ is represented with $O(\log n)$-bit accuracy, then, with high probability, the running time is linear in the expected number of edges of the output graph (up to poly-logarithmic factors). All our running times include the complexity of the arithmetic involved in the corresponding algorithms. Previous standard methods for exact $G(n,p)$ sampling (see e.g.~\cite{bb}) achieve similar running times, however, under the assumption that performing real number arithmetic with arbitrary accuracy takes constant time. We note that the actual accuracy required by these methods is $O(n)$-bit  per step, which results in quadratic running times. We also note that compromising on the $O(n)$-bit accuracy requirement causes arbitrary large biases in the sampling.

The main idea of our $G(n,p)$ generation algorithm is  a  Metropolis Markov chain to sample $\epsilon$-close from the binomial distribution. This is a new method for sampling from the binomial distribution: it is of separate interest and may find other useful applications. Our analysis accounts for all necessary bit-accuracy and arithmetic. Our running times are  comparable to known methods for exact binomial sampling (e.g. surveyed in \cite{ks}), however, only when the latter do not account for bit-accuracy and assume arbitrary bit arithmetic in constant time. Dropping these assumptions affects their running times and/or causes  bias which has never been quantified. In this sense, our work can be viewed as a rigorous quantification of the tradeoff between accuracy and running time, when all computational aspects are taken into account. 

We further obtain efficient generation algorithms for random graphs with given arbitrary degree distributions, Inhomogeneous Random Graphs when the kernel function is the inner product, and Stochastic Kronecker Graphs. Efficient generation of these random graph models is essential for modeling large scale complex networks. To the best our knowledge, our work can be viewed as the first effort to simulate efficient generation of graphs from classical random graph models, while taking into account  implementational considerations as fundamental computational aspects, and quantifying the tradeoff between accuracy and running time in a way that can be useful in practice.

\end{abstract}

\null
\vfill

\pagebreak

\section{Introduction}
\label{intro}

Let $n\! \geq\! 1$ be an integer and let ${\cal G}_n$ be the set of 
$2^{{n \choose 2}}$ undirected simple graphs on $n$ vertices. 
For $0\! \leq p\! \leq 1$, we typically think of $G(n,p)$ as a graph in ${\cal G}_n$,
where each edge $\{u,v\}$ is present with probability $p$ and absent with probability $(1\! - \! p)$,
independently from all other edges. The straightforward way to generate such a graph
involves ${n \choose 2}$ independent experiments.
Thus the running time is $\Omega (n^2)$.
In practice, we need an additional $\Omega ( \log n)$ to represent $n$ distinct vertices 
and an additional $\Omega (\log p^{-1} )$ to simulate sampling with probability $p$ 
resulting in $\Omega \left( n^2 (      \log n +     \log p^{-1}      ) \right)$ total running time. 

If $m$ is the number of edges of the output graph, 
it is highly desirable to aim for  $O(m)$ running times, 
especially when ${\rm E}[ m ]  << n^2$ and $n$ is very large. 
This is true, for example, in the case of complex networks.
In these cases, the number of vertices $n$ is known to scale massively, 
while the corresponding  graphs remain relatively sparse~\cite{ab,bjr,cf,d,vdh}
(for example, due to natural underlying resource considerations,
as has been discussed extensively in the literature).
Taking into account the implementational issues mentioned in the previous paragraph
and the resource considerations,
we should aim for  $O \left( m (      \log n +     \log p^{-1}      ) \right)$ running times. 

Let $w_1 ,  \ldots , w_n$ be non-negative weights on $n$ vertices. Thus we may refer to the $w_u$'s as an $n$-dimensional vector $\vec{w}$. 
Let $G(n, \vec{w})$ be a graph in ${\cal G}_n$,
where each edge $\{ u,v\}$ is present with probability $\min \{w_u w_v , 1 \}$,  
independently from all other edges. Notice that $G(n,p)$ is a special case 
of $G(n, \vec{w})$, where $w_u \! = \! \sqrt{p}$ for all $u$. 
Moreover, $G(n, \vec{w})$ subsumes the case of random graphs with given expected 
degrees~\cite{cl2,cl,cl1,clv,mp,mh}. 

Let $d\! \geq \! 1$ be an integer and let $\vec{w_1} ,  \ldots , \vec{w_n}$ be vectors in $d$-dimensional
real space with non-negative coordinates: $w_{uk} \geq 0$,  
$\forall 1\! \leq \! u \! \leq \! n $ , $\forall 1\! \leq\!  k \! \leq \! d $.
We may thus refer to the $\vec{w_u}$'s using an $n\times d$ matrix $W$. 
Let $G(n,W)$ be a graph in ${\cal G}_n$,
where each edge $\{ u,v \}$ is present with probability $\min \{  \langle w_u , w_v  \rangle, 1  \}$,  
independently from all other edges, and where $\langle w_u , w_v  \rangle$ is the usual inner product 
$\langle w_u , w_v  \rangle  \! = \! \sum_{k=1}^d w_{uk} w_{vk}$. 
Notice that $G(n, \vec{w})$ is a special case of $G(n,W)$, where $d\! = \! 1$.
Moreover, $G(n, W)$ is a special case of Inhomogeneous Random Graphs~\cite{bjr,bjr1},
where the kernel function is the inner product. Random inner product graphs have been also studied in~\cite{ys,y}.
In the context of complex networks, the interpretation of vertices represented as $d$-dimensional vectors is natural.
Real datasets are categorical. Therefore, each dimension represents a distinct attribute, 
and data points connect with probabilities related to their similarity according to such attributes.
 
Let $k,d\! \geq \! 1$ be integers, and let ${\cal P}$ be a $d \times d$ initiator matrix. Define recursively the matrix $K_k \!=\! {\cal P} \otimes K_{k-1},$ where $K_0 = I$ and $\otimes$ is the Kronecker product of matrices \cite{leckfz}. For $n \!=\! d^k,$ let $G(n,{\cal P})$ be a graph in ${\cal G}_n$, where each edge $\{ u,v \}$ is present with probability $\min \{K_k(u,v),1\},$ independently from all other edges. $G(n,{\cal P})$ is known in the literature as Stochastic Kronecker Graphs \cite{ik,leckf,leckfz,lef,mx}.

In the context of complex networks, the model of random graphs with given expected degrees $G(n,\vec{w})$, 
the model of Inhomogeneous Random Graphs where the kernel function is the inner product $G(n,W)$,
and the Stochastic Kronecker model $G(n,{\cal P})$ have been used to produce 
synthetic graphs that capture important structural properties of real networks. 
Thus, efficient algorithms to generate such random graphs are important
both in theory and in practice.

Let $\pi$ be a probability distribution on ${\cal G}_n$. Random graph models, such as $G(n,p)$, $G(n,\vec{w}),$ $G(n,W),$ and $G(n,{\cal P})$ defined above, 
are equivalent to such distributions $\pi$ over ${\cal G}_n$. Moreover, properties of random graphs in such models are typically expressed 
as holding with high probability. This means that, for some constant $c\! > \!0$, 
the subset ${\cal G}_{\rm BAD}$ of graphs in ${\cal G}_n$ for which the property does not hold has 
$\pi (  {\cal G}_{\rm BAD} ) \leq n^{-c}$~\cite{b,d}. 
Such quantification is of fundamental predictive value, both in theory and applications.

Let $\pi^\prime$ be a probability distribution on ${\cal G}_n$ that is $\epsilon$-close to $\pi$, 
in the usual sense of variation distance: 
$\max_{{\cal H} \subset {\cal G}_n } |  \pi^\prime ( {\cal H}) - \pi ({\cal H}   ) | \leq \epsilon$. 
In the context of large scale complex networks it is typical to check experimentally 
for various desired (or not desired) properties of graphs generated according to a target distribution $\pi$. 
If generation was done according to $\pi^\prime$ that is provably $\epsilon$-close to $\pi$, 
and if the estimate was $\pi^\prime (  {\cal G}_{\rm BAD} ) \leq \epsilon_0$, 
we may readily infer that $\pi  (  {\cal G}_{\rm BAD} ) \leq \epsilon_0 \! + \! \epsilon$. 
On the other hand, if the estimate was $\pi^\prime (  {\cal G}_{\rm BAD} ) > \epsilon_0$, 
we may readily infer that $\pi  (  {\cal G}_{\rm BAD} ) > \epsilon_0 \! - \! \epsilon$. 
Thus, if we are able to fine-tune $\epsilon$, generation from $\pi^\prime$ becomes also
of fundamental predictive value. 

Our algorithms generate graphs from distributions $\epsilon$-close to those implied by 
$G(n,p)$, $G(n,\vec{w} ),$ $G(n , W ),$ and $G(n,{\cal P})$ respectively, 
at the cost of a multiplicative factor $O( \log \epsilon^{-1} )$ in the running time. 
This implies that we can set $\epsilon \! = \! n^{-c}$, for any constant $c$,
thus matching typical high probability statements for random graphs~\cite{b,d}. 
For historical reasons, we also mention that $\epsilon$-close sampling 
has been extensively used in theoretical computer science, 
for example in the context of approximate counting via Monte Carlo Markov chain simulation, 
among others. 

In particular, for $G(n,p)$, the running time
is $O\left( \mu \log n  \log \epsilon^{-1} (\log n+ \log p^{-1}) \right)$ in expectation where $\mu$ is the expected number of edges of the output graph and the factor $ (\log n+ \log p^{-1})$ accounts for representation and arithmetic.  We may obtain a high probability upper bound on the running time at a
cost of an additional $O(\log n )$ multiplicative factor. 

Finally, we should note that, for the sake of clarity in presentation,
we have not tried to optimize the poly-logarithmic factors at all points (especially in Section 5).
There are also points where the pseudocode might hint to ``difficult"
arithmetic (such as computing square roots). In such cases, 
we comment right below those pseudocodes, as to how these points can be bypassed. 

The rest of the paper is organized as follows. 
In Section 1.1 we quantify the claim that known algorithms for generating Erdos-Renyi $G(n,p)$ graphs 
with $m$ output edges run in $O(m)$ time. We note that these methods actually require $O(n)$ bits of arithmetic accuracy per step, which
clearly results in $O(mn)$ running times, when all computational aspects are considered.
Hence these methods for $G(n,p)$ (and hence all generalizations of $G(n,p)$, e.g. see ~\cite{mh}) are not efficient.
In Sections 2, 3, 4, and 5 we develop efficient algorithms to generate random graphs 
$\epsilon$-close to $G(n,p),$ $G(n,\vec{w}),$ $G(n,W),$ and $G(n,{\cal P})$ respectively.

\subsection{Previous work and Contributions}

The standard reference for generating a graph from $G(n,p)$ with running time $O( m )$, 
where $m$ is the number of output edges, is~\cite{bb}. 
This algorithm was recently extended to $G(n,\vec{w})$ in~\cite{mh}. 
For $G(n,p),$ the idea in~\cite{bb} is to order vertices and have each vertex decide 
its ``distance" or ``jump" to its next neighbor according to the ordering, 
where ``jumping" a vertex corresponds to an edge that is not present in the final output graph.
All vertices bypassed by such jumps will not be processed in the corresponding iteration.
Thus, the intuition is that the number of actual steps of the algorithm is proportional
to the number of edges present in the output graph, suggesting a $O(m)$ running time.

However, we argue below that the implementation of these ``jumps", 
which involve the simulation 
of a negative binomial with parameters $n$ and $p$,  
immediately introduce the necessity of $n$-bit accuracy per ``jump".
Therefore, the running time becomes $O( n m)$. 
The authors bypassed this problem by assuming constant running time for real number arithmetic and representation with arbitrary accuracy.
If  $n$-bit accuracy is compromised for any ${\rm poly} \log n$-bit accuracy, 
then the bias becomes immediately arbitrary.  

The algorithm in~\cite{bb} works as follows. First, the vertices are ordered, and the main observation is
that, at any given time during an iteration, 
the probability of generating the next edge after exactly $k$ trials 
is $(1\! - \! p)^{k-1}p$; i.e. waiting times for the edges are geometrically distributed. Let $q \! = \! 1\! - \! p$; to sample waiting times, 
each positive integer $k$ is assigned an interval $I_k \! \subseteq  \! [0,1)$ 
of length $q^{k-1}p.$ Realize that $\sum_{k=1}^{\infty}q^{k-1}p \! = \! 1,$ and if the intervals are contiguous starting at 0, 
then interval $I_k$ ends at $\sum_{i=1}^{k}q^{i-1}p \! = \! 1 \! - \! q^k$. Therefore, the waiting times can be sampled by randomly chosing $r \! \in \! [0,1)$ and selecting the smallest $k$ for which $r < 1 \! - \! q^k .$
The pseudocode from \cite{bb} is included bellow.
\begin{center}
\begin{fmpage}{8cm}

\hspace{0mm} $E \leftarrow \emptyset$; $v \leftarrow 1$; $w \leftarrow -1$;

\hspace{0mm} {\bf while} $v< n$ {\bf do}

\hspace{5mm} Draw $r  \! \in \! [0,1)$, uniformly at random;

\hspace{5mm} $w \leftarrow w\! + \! 1 \! + \! \lfloor     \frac {\log (1 - r)}{\log (1 - p)}                            \rfloor;$ 

\hspace{5mm} {\bf while} $w \geq v$ and $v \leq n$ {\bf do}

\hspace{10mm}  $w \leftarrow w \! - \! v$; $v \leftarrow v \! + \! 1$;

\hspace{5mm} {\bf if} $v < n$ {\bf then} $E \leftarrow E \cup \{ u,w \};$

\hspace{0mm} {\bf return} $(E);$
\end{fmpage}
\end{center}

It should be clear now that we need $r$ sampled with accuracy $O(n)$ bits in order to simulate fair sampling from all intervals.
If the arithmetic used
has accuracy $\alpha (n)$, then all intervals $I_k$ for $k > \alpha(n)$ will not be represented in the sampling. 
Thus, $\alpha (n) \! = \! o(n)$ introduces arbitrary (unquantified) sampling bias on every step of the algorithm. We also note that the arithmetic in~\cite{bb} involves the computation of discrete logarithms,
and this issue has been raised in~\cite{nlkb}, 
who however do not offer any solution with quantified performance. 
In~\cite{mh}, the same issues carried over for $G(n,\vec{w})$.

Furthermore, the authors in~\cite{bb} list an array of widely used software 
for random graph generation that implement inefficient algorithms. They note that such software 
provides running times tolerable for graphs up to tens of thousands of nodes. We remark that current technology
requires synthetic data involving much larger number of nodes (e.g. to simulate social networks.) 

Of course, the most natural way to sample $G(n,p)$ is to sample the number of edges $m$ from the 
binomial distribution $B( {n \choose 2 }, p)$, and then choose exactly $m$ out of the $n \choose 2$ edges at random. The main idea of our $G(n,p)$ generation algorithm is  a  Metropolis Markov chain to sample $\epsilon$-close from the binomial distribution. Our analysis accounts for all necessary bit-accuracy and arithmetic. Our running times are  comparable to known methods for exact binomial sampling (e.g., surveyed in \cite{ks}), however, only when the latter do not account for bit-accuracy and assume arbitrary bit arithmetic in constant time. Dropping these assumptions affects their running times and/or causes bias which has never been quantified. Our method to sample from the binomial distribution offers a rigorous quantification of the tradeoff between accuracy and running time, when all computational aspects are taken in to account. Therefore, it is of separate interest and may find other useful applications.

Our algorithms for $G(n,\vec{w}),$ $G(n,W),$ and $G(n,{\cal P})$ offer also efficient running times while taking into account all computational aspects. To the best of our knowledge, our work can be viewed as the first effort to 
simulate efficient generation of graphs from classical random graph models, 
while (a) taking into account fundamental implementational considerations
and (b) quantifying possible ``errors" in a way that can be useful in practice,
i.e. with quantified predictive value.

\section{Efficient Generation $\epsilon$-close to $G(n,p)$}
\label{gnp}

In this section $\pi$ is the probability distribution over ${\cal G}_n$ implied by $G(n,p)$. 
In particular, for any specific graph $G(V,E) \! \in \! {\cal G}_n$,
$\pi ( G(V,E) ) \! = \! p^{|E|} (1 \! - \! p )^{N - |E|} $, 
where $N \! = \! {n \choose 2}$.
It is obvious that $\pi ( |E| \! = \! k ) = {N \choose k} p^{k} (1 \! - \! p )^{N - k} $,
since there are ${N \choose k}$ distinct graphs in ${\cal G}_n$ with $k$ edges
and  $\pi$ assigns to all these graphs the same probability.
Thus, if  $B(N,p)$  is  the binomial distribution with parameters $N$ and $p$, 
then $\pi ( |E| \! = \! k ) = B(k;N,p)$.

Therefore, a natural two-step approach to generate a graph $G(V,E)$ according to $\pi$ is 
to sample $|E|$ from $B(N,p)$
and generate a random combination of $|E|$ out of $N$ possible edges. 
The second step can be implemented in time $O(\max\left\{ |E| , n \right\} \log n)$ 
(including representation and arithmetic)
using, for example,  the classic algorithm in~\cite{bf}. 
Sampling from $B(N,p)$, however, is much more involved, and it is analyzed in detail in Subsection~\ref{approxBNp}.

In Subsection~\ref{samplefast} we give algorithm 
{\bf Sample-G}$(n,p,\epsilon ) $ which uses the {\bf Coupling Markov Chain}~(\ref{coupleBNp}) 
to generate $|E|$, and include all implementational aspects (beyond mixing time)
that result in efficient sampling from a distribution $\epsilon$-close to $G(n,p)$.

\subsection{Markov Chains $\epsilon$-close to Binomial Distributions}
\label{approxBNp}

Let $B(N,p)$ be a binomial distribution with parameters $N \in \mathbb{N}$ and $0 \! < \! p\! < \! 1$. There are several methods to sample from the binomial distribution (for a detailed survey see, e.g.~\cite{ks}). However, we desire a rigorous quantification of the tradeoff between accuracy and running time. Current known techniques do not have this feature.

We design a Markov chain approach
to sample from a distribution that is $\epsilon$-close to $B(N,p)$,
in the sense of variation distance.
Let $\mu \! = \! Np$; the Markov chains are Metropolis-Hastings random walks on  a segment of the line 
around $\mu \pm  O(\sqrt{\mu \ln \epsilon^{-1}})$,
with expected coupling times (convergence rates)  $O( \mu \ln \epsilon^{-1})$.
The Markov chain is defined in~(\ref{mcBNp}) and the coupling which bounds 
convergence rate is defined in~(\ref{coupleBNp}). 
Lemma 4 and Theorem 5 establishes convergence and mixing time.

Let $0\! < \! \epsilon \! < \! 1$, $\xi \! = \! \mu \! - \! \lfloor \mu \rfloor,$ and  $\Delta  \geq \sqrt{ 4 N \max \left\{  p ,  \frac{4}{N} \ln \left( 2/\epsilon \right) \right\} \ln \left( 2/\epsilon \right) }.$
If $\xi \! \leq \! 1\! - \! p$, let $\bar{\mu} \! = \! \lfloor \mu \rfloor$; otherwise let $\bar{\mu} \! = \! \lceil \mu \rceil$, and define $\Delta^- \! = \! \min \{ \Delta , \bar{\mu} \}$ and $\Delta^+ \! = \!  \min \{ \Delta , N - \bar{\mu}\}$. Finally, let $B_{\Delta}(N,p)$ be the following probability distribution defined on the integer interval
$I \! = \! \left[ \bar{\mu}\! -\! \Delta^- ,  \bar{ \mu}\! + \!\Delta^+ \right]$,
\begin{equation}
\label{Bdelta}
B_{\Delta}(k; N,p) = \frac {B(k;N,p)}   
{ 1 - \sum\limits_{ x \not\in I}B(x;N,p) }
\end{equation}

\begin{fact}
\begin{equation}
\label{boundepsilon}
\sum_{ x \not\in I}B(x;N,p) < \epsilon
\end{equation}
\end{fact}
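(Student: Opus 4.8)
The plan is to read the left-hand side as a binomial tail probability and to bound it with a Chernoff/Bernstein estimate, the point being that the hypothesis on $\Delta$ is engineered to supply exactly the two inequalities that the two Chernoff regimes require. First I would introduce $X\sim B(N,p)$, so that $\sum_{x\not\in I}B(x;N,p)=\Pr[X<\bar\mu-\Delta^-]+\Pr[X>\bar\mu+\Delta^+]$. Because $\Delta^-=\min\{\Delta,\bar\mu\}$ and $\Delta^+=\min\{\Delta,N-\bar\mu\}$, the truncation by $\min$ only pushes the endpoints inward to the values $\max\{\bar\mu-\Delta,0\}$ and $\min\{\bar\mu+\Delta,N\}$; since these differ from $\bar\mu-\Delta$ and $\bar\mu+\Delta$ only over negative values, respectively over values exceeding $N$, where $X$ cannot lie, the two tails equal $\Pr[X<\bar\mu-\Delta]$ and $\Pr[X>\bar\mu+\Delta]$ (each understood as $0$ when the threshold leaves $[0,N]$). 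It therefore suffices to prove that each of these two tails is at most $\epsilon/2$.

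The key observation is that, writing $L:=\ln(2/\epsilon)$, the assumption $\Delta\ge\sqrt{4N\max\{p,\tfrac4N L\}L}$ is exactly the pair of bounds $\Delta^2\ge 4NpL=4\mu L$ and $\Delta^2\ge 16L^2$, i.e.\ $\Delta\ge 4L$, holding simultaneously. The $\max$ is precisely what encodes the crossover between the sub-Gaussian regime $\mu\ge 4L$, where $4\mu L$ dominates, and the large-deviation regime $\mu<4L$, where $16L^2$ dominates; this matches the two regimes of the multiplicative Chernoff bound (additive deviation small versus large relative to the mean).

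Next I would apply standard concentration, using the variance proxy $\mathrm{Var}(X)=Np(1-p)\le\mu$. For the upper tail I would use the Bernstein form $\Pr[X\ge\mu+t]\le\exp(-t^2/(2\mu+\tfrac23 t))$ with $t=\Delta$: in the sub-Gaussian regime $\Delta\le\mu$ the exponent is at least $3\Delta^2/(8\mu)\ge \tfrac32 L$, and in the regime $\Delta>\mu$ it is at least $3\Delta/8\ge\tfrac32 L$, so in both cases the tail is at most $e^{-L}=\epsilon/2$ with room to spare. For the lower tail I would use $\Pr[X\le\mu-t]\le\exp(-t^2/2\mu)$ with $t=\Delta$; since $\Delta^2\ge 4\mu L$ the exponent is at least $2L\ge L$, again giving at most $\epsilon/2$. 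Summing the two and using the strict slack in these exponents yields the strict bound $<\epsilon$.

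The one point needing care, and the genuine obstacle, is that $I$ is centred at the integer $\bar\mu$ rather than at $\mu$. The rule $\xi\le 1-p$ versus $\xi>1-p$ places $\bar\mu$ at the mode of $B(N,p)$ and fixes the sign of $\bar\mu-\mu$, whose absolute value is less than $1$; replacing the thresholds $\bar\mu\pm\Delta$ by $\mu\pm\Delta$ therefore perturbs the deviation $t$ by at most this amount. This perturbation is absorbed by the constant-factor slack established above (exponents $\ge\tfrac32 L$ and $\ge 2L$ against a required $L$) once $\Delta$ exceeds a modest absolute constant, which is automatic throughout the regime of interest $\epsilon\le\tfrac12$, and in particular for $\epsilon=n^{-c}$ where $L=\Theta(\log n)$ and $\Delta\to\infty$. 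Thus the real content is identifying the two inequalities hidden in the definition of $\Delta$ and selecting the matching two-regime Chernoff estimate; the centring offset is pure bookkeeping.
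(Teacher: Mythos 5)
Your proposal is correct and follows essentially the same route as the paper: split the complement of $I$ into two binomial tails and apply the two regimes of the Chernoff/Bernstein bound, with the $\max$ in the definition of $\Delta$ supplying exactly the inequalities $\Delta^2\ge 4\mu\ln(2/\epsilon)$ and $\Delta\ge 4\ln(2/\epsilon)$ needed in the sub-Gaussian and large-deviation cases respectively (the paper phrases this via the multiplicative Chernoff bound with $\delta=\sqrt{4\ln(2/\epsilon)/\mu}$ or $\delta=4\ln(2/\epsilon)/\mu$, and notes the lower tail is vacuous when $\mu\le 4\ln(2/\epsilon)$). The only difference is that you explicitly absorb the $|\bar\mu-\mu|<1$ centring offset into the slack of the exponents, a bookkeeping step the paper's ``follows immediately'' elides, at the cost of implicitly restricting to $\epsilon\le\tfrac12$, which is the regime actually used.
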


\begin{proof} For $p \! > \!  \frac{4}{N} \ln \left( 2/\epsilon \right)$ 
and $\delta \! = \! \sqrt{ \frac{4 \ln  \left( 2/\epsilon \right)}{\mu}} \! < \! 1$, Chernoff bounds suggest,
\[
\Pr \left[ \left|B(N,p) \! - \! \mu \right| > \sqrt{4  \mu \ln  \left( 2/\epsilon \right) }  
     \right]
= 
\Pr \left[ |B(N,p) \! - \! \mu | > \delta \mu
     \right]
< 2 e^{-\frac{\delta^2}{4}\mu}
= \epsilon
\]

Similarly, for $p \! \leq \! \frac{4}{N} \ln \left( 2/\epsilon \right)$ and $\delta \! = \! \frac{4 \ln  \left( 2/\epsilon \right)}{\mu},$ Chernoff bounds and $\mu \! \leq \! 4 \ln \left( 2/\epsilon \right)$ suggest, 
\[
\Pr \left[ |B(N,p) \! - \! \mu | > 4 \ln  \left( 2/\epsilon \right) \right]  = \Pr \left[ B(N,p) > (1+\delta)\mu \right] <  e^{-\frac{\delta^2}{2+\delta}\mu} < \epsilon.  
\]
Therefore, we have
$\Pr \left[ |B(N,p) \! - \! \mu | > \Delta \right] < \epsilon$ and  
$\Delta  \geq \sqrt{ 4 N \max \left\{  p ,  \frac{4}{N} \ln \left( 2/\epsilon \right) \right\} \ln \left( 2/\epsilon \right)  }$ for all $p$ and $0 \! < \! \epsilon \! < \! 1$.
Now (\ref{boundepsilon}) in Fact 1 follows immediately. \end{proof}

To sample from $B_{\Delta}(N,p)$ we define a {\bf Metropolis-Hastings Markov chain} $M$ on the interval $I$ with stationary distribution $B_{\Delta}(N,p).$ The transition probabilities are,
\begin{equation}
\label{mcBNp}
X_{t+1} = \left\{ \begin{array}{ll} 
                                  X_t       &   {\bf w.p.} ~~1/2  \\ 
                                  X_t+1  &   {\bf w.p.}~~\alpha^+(X_t)/4 \\
                                  X_t -1  &   {\bf w.p.}~~\alpha^-(X_t)/4  \\ 
		          X_t      &   {\bf w.p.}~~\frac{1-\alpha^+(X_t)}{4}+\frac{1- \alpha^-(X_t)}{4}\\ 
\end{array} \right.
\end{equation}
where the functions $\alpha^+(\cdot)$ and $\alpha^-(\cdot)$ are defined as usual for Metropolis-Hastings Markov chains,
\begin{displaymath}
\alpha^+ (k)  =  \left\{
                                         \begin{array}{ll}
                                              0     &   {\bf if}~k =   \bar{\mu}+ \Delta^+           \\
  \frac{N-k}{k+1}\frac{p}{1-p}   &   {\bf if}~ \bar{\mu} \leq k <  \bar{\mu}+ \Delta^+      \\
                                             1     &    {\bf  if}~ k < \bar{\mu} \\
                                         \end{array}
                           \right.
~~~~
\alpha^- (k)  =  \left\{
                                         \begin{array}{ll}
                                              1     &   {\bf if}~k > \bar{\mu}          \\
  \frac{k}{N-k+1}\frac{1-p}{p}   &   {\bf if}~   \bar{\mu} \geq k >  \bar{\mu}- \Delta^-   \\
                                              0    &    {\bf  if}~ k =   \bar{\mu}- \Delta^-  \\
                                         \end{array}
                           \right.
\end{displaymath}

\noindent
\begin{fact} The range of $\alpha^+(\cdot)$ and $\alpha^-(\cdot)$ is $[0,1]$. \end{fact}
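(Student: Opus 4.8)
The plan is to dispatch the two explicitly constant branches of each function first and then reduce the nontrivial middle branch to a single linear inequality in $k$. For $\alpha^+$, the values $0$ (at $k = \bar{\mu} + \Delta^+$) and $1$ (for $k < \bar{\mu}$) already lie in $[0,1]$, and symmetrically the values $1$ and $0$ for $\alpha^-$ pose no difficulty, so it suffices to bound the fractional expressions on their respective middle ranges.

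For nonnegativity I would simply read off the signs of the factors. On the range $\bar{\mu} \le k < \bar{\mu} + \Delta^+$ for $\alpha^+$, note $k < \bar{\mu} + \Delta^+ \le N$ since $\Delta^+ \le N - \bar{\mu}$, so $N - k > 0$; combined with $k + 1 > 0$ and $0 < p < 1$ this gives $\frac{N-k}{k+1}\frac{p}{1-p} \ge 0$. The same argument applied to $\alpha^-$, using $k > \bar{\mu} - \Delta^- \ge 0$ so that $N - k + 1 > 0$, yields nonnegativity there.

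The substance is the upper bound $\le 1$. For $\alpha^+$ I would clear denominators: $\frac{N-k}{k+1}\frac{p}{1-p} \le 1$ is equivalent to $(N-k)p \le (k+1)(1-p)$, which simplifies to $\mu - (1-p) \le k$ after using $\mu = Np$. Likewise $\alpha^-(k) \le 1$ reduces to $k \le \mu + p$. Thus the whole claim rests on the two inequalities $\mu - (1-p) \le \bar{\mu}$ and $\bar{\mu} \le \mu + p$, since the middle ranges lie in $k \ge \bar{\mu}$ and $k \le \bar{\mu}$ respectively. Here the definition of $\bar{\mu}$ through the threshold $\xi \le 1-p$ is exactly what is needed, and I expect this case analysis to be the main (if mild) obstacle. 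When $\xi \le 1-p$ we have $\bar{\mu} = \lfloor \mu \rfloor = \mu - \xi$, so $\mu - \bar{\mu} = \xi \le 1-p$ and $\bar{\mu} - \mu = -\xi \le 0 \le p$, giving both inequalities. When $\xi > 1-p$ we have $\bar{\mu} = \lceil \mu \rceil = \mu + (1 - \xi)$, so $\mu - \bar{\mu} = \xi - 1 < 0 \le 1-p$ and $\bar{\mu} - \mu = 1 - \xi < p$ by the case hypothesis; again both hold. Combining the boundary branches, nonnegativity, and these two inequalities establishes that both $\alpha^+$ and $\alpha^-$ take values in $[0,1]$.
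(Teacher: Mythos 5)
Your proof is correct and is exactly the ``elementary calculation ensured by the definition of $\bar{\mu}$'' that the paper's one-line proof alludes to but does not write out: the reduction to the two inequalities $\mu-(1-p)\leq\bar{\mu}$ and $\bar{\mu}\leq\mu+p$, verified via the case split on $\xi\leq 1-p$, is precisely the intended argument. (One cosmetic nit: $N-k+1>0$ in the $\alpha^-$ branch follows from $k\leq\bar{\mu}\leq N$, not from the lower bound $k>\bar{\mu}-\Delta^-\geq 0$, which instead gives nonnegativity of the numerator; both facts are immediate, so nothing is at stake.)
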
 

\begin{proof} This is ensured by the definition of $\bar{\mu}$ and can be verified by elementary calculations.\end{proof}

\noindent
\begin{fact} For any starting state (or probability distribution) $X_0 \in I$, $X_t$ converges to $B_{\Delta}(N,p)$. \end{fact}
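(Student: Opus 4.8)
The plan is to establish convergence by appealing to the standard ergodicity theorem for finite Markov chains: I would show that $M$ is irreducible and aperiodic on the finite state space $I$, and that $B_{\Delta}(N,p)$ satisfies detailed balance with respect to the transition rule~(\ref{mcBNp}). Since these three properties together force convergence to the unique stationary distribution from any initial $X_0 \in I$, establishing them suffices. Aperiodicity is immediate, since every state carries a holding probability of at least $1/2$ from the first line of~(\ref{mcBNp}), so every state has period $1$. Irreducibility follows by checking that $\alpha^+(k) > 0$ for every $k$ with $\bar{\mu} - \Delta^- \le k < \bar{\mu} + \Delta^+$ and $\alpha^-(k) > 0$ for every $k$ with $\bar{\mu} - \Delta^- < k \le \bar{\mu} + \Delta^+$; hence from any interior state one can reach both neighbors with positive probability, and the two endpoints can still be entered and exited, so all of $I$ forms a single communicating class.

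The heart of the argument is detailed balance, namely the reversibility identity
\[
B_{\Delta}(k;N,p)\,\frac{\alpha^+(k)}{4} = B_{\Delta}(k+1;N,p)\,\frac{\alpha^-(k+1)}{4}
\]
for every adjacent pair $k, k+1 \in I$. Because the normalizing constant in~(\ref{Bdelta}) cancels on both sides, this reduces to a statement about the unnormalized binomial weights, and the key ingredient is the ratio of consecutive terms,
\[
\frac{B(k+1;N,p)}{B(k;N,p)} = \frac{N-k}{k+1}\cdot\frac{p}{1-p}.
\]
The Metropolis-Hastings acceptance functions $\alpha^+$ and $\alpha^-$ were defined precisely so that this ratio appears. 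For $k \ge \bar{\mu}$ one has $\alpha^-(k+1) = 1$ while $\alpha^+(k)$ equals exactly $B(k+1)/B(k)$, and for $k < \bar{\mu}$ one has $\alpha^+(k) = 1$ while $\alpha^-(k+1)$ equals $B(k)/B(k+1)$; in each regime the ratio identity above makes detailed balance hold term by term.

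The step I expect to demand the most care is the boundary bookkeeping rather than any substantive inequality. At the two endpoints $k = \bar{\mu} - \Delta^-$ and $k = \bar{\mu} + \Delta^+$ the acceptance probabilities are set to $0$, and I would check that this is consistent with reflection at the ends of $I$, so that no probability mass leaks outside the interval and the detailed balance equation is trivially satisfied (both sides vanish) on the two boundary edges. The transition across $k = \bar{\mu}$, where the defining cases for $\alpha^+$ and $\alpha^-$ switch regimes, must also be matched explicitly. Once detailed balance is confirmed on every edge, summing it over the neighbors of each state shows $B_{\Delta}(N,p)$ is stationary for $M$, and combining stationarity with irreducibility and aperiodicity yields convergence of $X_t$ to $B_{\Delta}(N,p)$ from any starting distribution, as claimed.
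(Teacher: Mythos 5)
Your proposal is correct and follows essentially the same route as the paper's own proof: the paper likewise asserts ergodicity (which your holding-probability and positivity observations make explicit) and then verifies the detailed balance identity $B(k;N,p)\,\alpha^+(k) = B(k+1;N,p)\,\alpha^-(k+1)$ separately on the two regimes $k \ge \bar{\mu}$ and $k < \bar{\mu}$, using exactly the consecutive-term ratio $\frac{N-k}{k+1}\cdot\frac{p}{1-p}$ that you identify. Your attention to the boundary and regime-switch cases is a slightly more careful rendering of the same argument.
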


\begin{proof} It is obvious that $X_t$ is ergodic. Convergence to $B_{\Delta}(N,p)$ follows by verifying detailed balance conditions. The details are in Appendix 1. \end{proof}

\noindent
To bound the mixing time of $M$, we define a {\bf coupling} $(X_t, Y_t)$ on $I \times I$ and analyze its coupling time. The transitions probabilities are,

\begin{equation}
\label{coupleBNp}
(X_{t+1} , Y_{t+1}) = \left\{
\begin{array}{ll} 
(X_t,Y_t+1)     & {\bf w.p.}~~\alpha^+(Y_t)/4\\
(X_t, Y_t-1)     & {\bf w.p.}~~\alpha^-(Y_t)/4 \\
(X_t , Y_t )      & {\bf w.p.}~~\frac{1- \alpha^+(Y_t)}{4}+\frac{1- \alpha^-(Y_t)}{4}\\ 
(X_t+1,Y_t)     & {\bf w.p.}~~\alpha^+(X_t)/4\\
(X_t-1, Y_t)  & {\bf w.p.}~~\alpha^-(X_t)/4 \\
(X_t , Y_t )      &  {\bf w.p.}~~\frac{1- \alpha^+(X_t)}{4}+\frac{1- \alpha^-(X_t)}{4}\\ 
 \end{array} 
                                   \right.
\end{equation}
while $ X_t \neq Y_t.$ Once $X_t\! = \!  Y_t,$ they remain equal for all future times following the transitions in~(\ref{mcBNp}).

\begin{lemma}
For the coupling $(X_t , Y_t) $ with $X_0 \! = \! \bar{\mu} \! + \! \Delta^+ $ and $Y_0 \! = \! \bar{\mu} \! - \! \Delta^-$,  let $T \! = \! \min_t \{ X_t \! = \! Y_t   \}$ be the coupling time. 
Then $X_t$ is distributed according to $B_{\Delta}(N,p)$, for all $t \geq T$.
\end{lemma}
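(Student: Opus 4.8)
The plan is to first establish that the coupling is monotone, so that $T$ is a genuine coalescence time with $X_t=Y_t$ for all $t\ge T$, and then to identify the law of the common coalescence value with the stationary distribution $B_\Delta(N,p)$. Monotonicity is the easy structural step. Since the transitions in~(\ref{coupleBNp}) update exactly one of the two coordinates by $\pm 1$ at each step (the first three branches move only $Y$, the last three move only $X$), the gap $X_t-Y_t$ changes by at most $1$ per step. Starting from $X_0=\bar\mu+\Delta^+\ge \bar\mu-\Delta^-=Y_0$, I would check that whenever $X_t>Y_t$ the order is preserved: if $X$ is updated it can drop only to $X_t-1\ge Y_t$, and if $Y$ is updated it can rise only to $Y_t+1\le X_t$ (using that $X_t,Y_t$ are integers with $X_t\ge Y_t+1$). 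Hence the walks never cross, the gap can only reach $0$, and once $X_t=Y_t$ the rule in~(\ref{mcBNp}) keeps them equal, so $T=\min_t\{X_t=Y_t\}$ is well defined and $X_t=Y_t$ for every $t\ge T$.

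Next I would pin down the coalescence law. By Fact~3 the Metropolis chain~(\ref{mcBNp}) is ergodic with stationary distribution $B_\Delta(N,p)$. I would introduce a third copy $Z_t$ initialized with $Z_0\sim B_\Delta(N,p)$ and driven by the same randomness through a monotone grand coupling consistent with~(\ref{coupleBNp}); the birth--death structure on the linearly ordered interval $I$ makes order-preservation an elementary branch-by-branch check. Monotonicity then sandwiches $Y_t\le Z_t\le X_t$ for all $t$, so at coalescence $Z_T=X_T=Y_T$. Since $B_\Delta$ is stationary, $Z_t\sim B_\Delta$ for every fixed $t$, and transferring this to the coalescence value gives $X_T\sim B_\Delta$; stationarity then propagates this to all $t\ge T$.

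The main obstacle is exactly this transfer step: $T$ is a random time determined by the driving randomness, and evaluating a stationary chain at a data-dependent time can in general introduce bias (the familiar failure of naive forward coupling). To make the conclusion rigorous I would recast the argument in the coupling-from-the-past framework: because the chain is monotone on a linearly ordered interval, it suffices to track only the top and bottom trajectories, and running them from the past until they agree at a fixed reference time produces an unbiased draw from $B_\Delta$; a time-reversal and stationarity argument then identifies the law of that draw with the law of the forward coalescence value $X_T$, which by the sandwiching is a function of the driving randomness alone and does not depend on $Z_0$. Constructing the order-preserving grand coupling that simultaneously realizes~(\ref{coupleBNp}) and carries the stationary copy $Z$ is the one place where genuine care is needed; with it in place, the monotonicity and stationarity checks above complete the proof.
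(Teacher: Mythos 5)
Your proposal follows essentially the same route as the paper's proof: the paper also introduces a third trajectory $\widehat{Y}_t$ started from the stationary distribution $B_{\Delta}(N,p)$, uses monotonicity to sandwich $Y_t \leq \widehat{Y}_t \leq X_t$, and concludes from $X_T = \widehat{Y}_T$ that $X_T$ is distributed according to $B_{\Delta}(N,p)$. The substantive difference is that the paper stops there, i.e.\ it silently evaluates the stationary copy at the random time $T$ --- precisely the step you single out as the main obstacle. You are right that this step is not valid as written: stationarity gives $\widehat{Y}_t \sim B_{\Delta}(N,p)$ only for deterministic $t$, and the coalescence value of a forward coupling is in general biased (the standard motivation for coupling from the past). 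So your version is, if anything, more careful than the paper's own argument; the paper only acknowledges the issue obliquely via the parenthetical ``implementing the natural Coupling from the Past modification'' in Theorem 8. Two remarks on your repair. First, your CFTP route needs a monotone \emph{grand} coupling carrying all three trajectories with the correct marginals; the coupling~(\ref{coupleBNp}) moves only one coordinate per step, so extending it to three chains while preserving both order and marginals is not automatic --- a cleaner construction is the single-uniform monotone update (with probability $1/4$ propose ``up'' and accept iff $U < \alpha^+(k)$, with probability $1/4$ propose ``down'' and accept iff $U < \alpha^-(k)$), which is order-preserving because $\alpha^+$ is nonincreasing and $\alpha^-$ nondecreasing on $I$. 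Second, for the way the lemma is actually used downstream (Theorems 5 and 8), the exact distributional identity at the random time $T$ is not needed: the weaker, correct statement $\max_{S \subseteq I} |\Pr[X_t \in S] - B_{\Delta}(S;N,p)| \leq \Pr[T > t]$ follows from the same sandwich argument for each fixed $t$, with no CFTP machinery, and already suffices for the $\epsilon$-closeness claims.
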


\begin{proof}
Let $(\widehat{X}_t,\widehat{Y}_t)$ be the coupling $(X_t, Y_t)$ with 
$\widehat{X}_0 \! = \! \bar{\mu} \! + \! \Delta^+ $ and $\widehat{Y}_0$ 
sampled from the stationary distribution 
$B_{\Delta} (N,p)$. 
Thus $\widehat{Y}_t$ is distributed according to $B_{\Delta} (N,p)$, $\forall t$.
Notice that $X_0 \! = \! \hat{X}_0 \! \geq \! \hat{Y}_0 \! \geq \! Y_0$  implies
immediately  $X_t  \! = \! \widehat{X}_t \! \geq \! \widehat{Y}_t \! \geq \! Y_t$,  $\forall t$ by the monotonicity of the coupling. 
Thus, if $T \! = \! \min_{t} \{  X_t \! = \! Y_t \} $, then 
$X_T  \! = \! \widehat{X}_T \! = \widehat{Y}_T \! = \! Y_T$,
implying $X_T\! = \!\widehat{Y}_T.$ Therefore $X_T $ is 
distributed according to $B_{\Delta} (N,p),$ and $X_t$  is also distributed according to $B_{\Delta} (N,p)$, $\forall t \geq T$.
\end{proof}

\begin{thm}
$E\left[ T \right]$ is $O\left(\Delta^2\right)$ and, for any $c> 1$, $Pr \left[ T > 2 c \log n E\left[ T \right] \right] \leq n^{-c}$.
\end{thm}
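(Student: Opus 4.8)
The plan is to track the gap $D_t = X_t - Y_t$ and to show it is absorbed at $0$ in expected time $O(\Delta^2)$. By the monotonicity of the coupling exploited in Lemma~4 (and the fact that $X_0 = \bar\mu + \Delta^+$ and $Y_0 = \bar\mu - \Delta^-$ are precisely the two endpoints of $I$), we have $0 \le D_t \le D_0$ for all $t$, where $D_0 = \Delta^+ + \Delta^- \le 2\Delta$, and the state $D_t = 0$ is absorbing with $T = \min\{t : D_t = 0\}$. It therefore suffices to bound the expected absorption time of the bounded, one-dimensional process $D_t$.

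First I would record the one-step behaviour of the gap. From the coupling~(\ref{coupleBNp}) exactly one of $X_t,Y_t$ is updated per step, so $D_t$ increases by $1$ with probability $p_+ = \tfrac14(\alpha^+(X_t) + \alpha^-(Y_t))$, decreases by $1$ with probability $p_- = \tfrac14(\alpha^-(X_t) + \alpha^+(Y_t))$, and is otherwise unchanged. Writing $g(k) = \alpha^+(k) - \alpha^-(k)$, the drift is $E[D_{t+1}-D_t \mid \mathcal F_t] = p_+ - p_- = \tfrac14(g(X_t) - g(Y_t))$. The key structural fact is that $g$ is non-increasing on $I$: indeed $\alpha^+(k) = \min\{1, B(k{+}1;N,p)/B(k;N,p)\}$ is non-increasing and $\alpha^-(k) = \min\{1, B(k{-}1;N,p)/B(k;N,p)\}$ is non-decreasing, since the binomial ratios $\frac{N-k}{k+1}\frac{p}{1-p}$ and $\frac{k}{N-k+1}\frac{1-p}{p}$ are decreasing and increasing in $k$ respectively. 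As $X_t \ge Y_t$, this gives $E[D_{t+1}-D_t\mid \mathcal F_t] \le 0$, so $D_t$ is a supermartingale. I would also record a variance lower bound: while $D_t > 0$ the two states cannot both equal $\bar\mu$, so at least one of them differs from $\bar\mu$ and hence satisfies $\alpha^+ + \alpha^- \ge 1$ (one acceptance probability equals $1$ off the mode), whence $p_+ + p_- \ge \tfrac14$.

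The expected-time bound follows from two optional-stopping identities at $T$ (legitimate since $D_t$ lives on a finite state space with a reachable absorbing state, so $T < \infty$ almost surely and all increments are bounded). The Doob martingale of $D_t$ gives $E[\sum_{s<T}(p_-^{(s)} - p_+^{(s)})] = D_0$, and the Doob martingale of $D_t^2$, using $D_T = 0$, gives $E[\sum_{s<T}(p_+^{(s)} + p_-^{(s)})] = -D_0^2 + 2\,E[\sum_{s<T} D_s(p_-^{(s)} - p_+^{(s)})]$. Because $p_-^{(s)} - p_+^{(s)} \ge 0$ and $D_s \le D_0$ hold pathwise, the right-hand expectation is at most $D_0 \cdot E[\sum_{s<T}(p_-^{(s)} - p_+^{(s)})] = D_0^2$; combined with $p_+ + p_- \ge \tfrac14$ on $\{D_s > 0\}$ this yields $\tfrac14 E[T] \le E[\sum_{s<T}(p_+^{(s)} + p_-^{(s)})] \le D_0^2$, hence $E[T] \le 4D_0^2 = O(\Delta^2)$. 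I expect the main obstacle to be the careful verification that $g$ is non-increasing across the boundary cases encoded in the definition of $\bar\mu$ (floor versus ceiling of $\mu$) and at the two endpoints of $I$, which is exactly where the precise choice of $\bar\mu$ is used; everything else is bounded-increment martingale bookkeeping.

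Finally, for the tail bound I would use the standard restart argument. The quantity $E[T]$ is the expected coupling time from the extreme start $X_0 = \bar\mu + \Delta^+$, $Y_0 = \bar\mu - \Delta^-$, which maximizes the initial gap; hence from \emph{every} pair of states the expected remaining coupling time is at most $E[T]$. By Markov's inequality, from any configuration the chains fail to couple within $2E[T]$ steps with probability at most $\tfrac12$. Applying the Markov property across $\lceil c\log n\rceil$ consecutive blocks of length $2E[T]$ gives $\Pr[T > 2c\log n\, E[T]] \le 2^{-c\log n} = n^{-c}$, as claimed.
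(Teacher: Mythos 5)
Your proof is correct, and for the main bound ${\rm E}[T]=O(\Delta^2)$ it takes a genuinely different route from the paper. The paper dominates the gap $L-(X_t-Y_t)$ by an explicit comparison process $Z_t$ (a lazy reflected random walk, Lemma 6, proved by a three-case analysis according to the positions of $X_t,Y_t$ relative to $\bar\mu$) and then bounds the hitting time of $Z_t$ by an induction on successive levels, $f(k+1)\le 4(k+1)$, summing to $2(L+1)^2$ (Lemma 7). You instead run a second-moment/optional-stopping argument directly on the gap $D_t=X_t-Y_t$: the supermartingale property ${\rm E}[D_{t+1}-D_t\mid {\cal F}_t]\le 0$ plus the variance lower bound $p_++p_-\ge 1/4$ on $\{D_t>0\}$ give ${\rm E}[T]\le 4D_0^2$ via the Doob decompositions of $D_t$ and $D_t^2$. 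The two arguments rest on the same structural facts --- the monotonicity of the Metropolis ratios and Fact 2 (the choice of $\bar\mu$ guaranteeing $\alpha^\pm\in[0,1]$, which is exactly what makes $g=\alpha^+-\alpha^-$ non-increasing across $\bar\mu$) --- but yours compresses the paper's case analysis of Appendix 2 and the level-by-level induction of Appendix 3 into two martingale identities, at the cost of a slightly worse constant ($4L^2$ versus $2(L+1)^2$) and of having to invoke optional stopping with an unbounded stopping time (routine here, since $p_-\ge 1/4$ off the diagonal forces $T<\infty$ a.s.\ and all increments are bounded). For the tail bound both proofs use the same restart-plus-Markov argument; your version is in fact a little more careful than the paper's, since you explicitly note that the restart step needs the expected remaining coupling time from an \emph{arbitrary} intermediate configuration to be at most ${\rm E}[T]$, which follows from the monotonicity of the coupling and the fact that $(X_0,Y_0)$ are the endpoints of $I$.
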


\begin{proof} First note that 
$L = \Delta^+ \! + \! \Delta^- \! = \! (X_0 \! - \! Y_0 ) \! \geq \! (X_t \! - \! Y_t ) \! \geq \! 0$
and $T \! = \! \min_t \{ (X_t \! - \! Y_t ) \! = \! 0 \} $. 
Furthermore, the definition of the coupling in~(\ref{coupleBNp}) implies that $(X_{t+1} \! - \! Y_{t+1})$ is,
\begin{equation}
\label{timeBNp}
(X_{t+1} - Y_{t+1}) = \left\{
\begin{array}{ll} 
(X_t-Y_t) -1     & {\bf w.p.}~~\frac{\alpha^+(Y_t)}{4} + \frac{\alpha^-(X_t)}{4}    \\
(X_t-Y_t  ) +1   & {\bf w.p.}~~\frac{\alpha^-(Y_t)}{4} + \frac{\alpha^+(X_t)}{4} \\
(X_t - Y_t ) +0     & {\bf w.p.}~~1 - \frac{\alpha^+(Y_t)}{4} - \frac{\alpha^-(X_t)}{4} 
                                                 - \frac{\alpha^-(Y_t)}{4} - \frac{\alpha^+(X_t)}{4} \\ 
 \end{array} 
                                   \right.
\end{equation}

To bound ${\rm E}[T]$, we introduce a simpler process $Z_t$
which converges at least as fast at~(\ref{timeBNp}).
In particular, let $\{\alpha_t\}$ be any sequence with $0\! \leq\! \alpha_t \! \leq \!1$ for all $t \! \geq 0.$ Let $Z_0 \! = \! 0$, 
and let 
\begin{equation}
\label{processZ}
Z_{t+1} = \left\{
\begin{array}{ll}
                              Z_t+1 & {\bf w.p.}~~\frac{1}{4}\left( 1+ \alpha_t \right)  \\
                         \min\{ Z_t - 1 ,  0 \} & {\bf w.p.}~~\frac{1}{4}\left( 1+ \alpha_t \right)  \\
                             Z_t + 0 & {\bf w.p.}~~\frac{1}{2}\left( 1- \alpha_t \right)  \\
\end{array}
                 \right.
\end{equation}

\begin{lemma}
For some sequence  $\{\alpha_t\},$ $\label{boundZ} Z_{t} =  L - (X_{t} - Y_{t})$
\end{lemma}

\begin{proof} In order to proof this Lemma, we reduce the characterization of $(\ref{timeBNp})$ to three cases: $X_t > Y_t \geq \bar{\mu},$ $X_t > \bar{\mu} > Y_t$, and $\bar{\mu} \geq X_t > Y_t$. In each case, we show that $(\ref{timeBNp})$ is of the form $(\ref{processZ})$.  The details are in Appendix 2.
\end{proof}

\noindent
Let $\cal{A}$ be the set of all sequences in the interval $[0,1]$. Then,
\begin{equation}
\label{boundEZ}
{\rm E}[T]\! = \! {\rm E}[T \! = \! \min_t \{ (X_t \! - \! Y_t ) \! = \! 0 \}] ~ \leq \! 
\max_{
\begin{array}{c}
\alpha_{t} \in {\cal A}
\end{array}
}
{\rm E} [ \min_t \{ Z_t \! = \! L \}]
\end{equation}

\noindent
To bound the right-hand-side of~(\ref{boundEZ}), we use the following Lemma. 

\begin{lemma}
$\max\limits_{
\begin{array}{c}
\alpha_{t} \in {\cal A}
\end{array}
}
E [ \min_t \{ Z_t \! = \! L \}] \leq 2(L+1)^2$
\end{lemma}

\begin{proof}
The proof is a suitable adaptation of the proof of an equivalent statement for random walks on the integer line with reflecting barrier at zero. The details are in Appendix 3.
\end{proof}

Finally, using the bounds in (\ref{boundEZ}) and Lemma 7, we get the upper bound for the expectation of the coupling time $T$, 
\begin{displaymath}
{\rm E}\left[ T\right] \leq 
{\rm E} \left[ \min_t \{  Z_t \! = L\}\right]  \leq 2(L \! + \! 1 )^2 = O(\Delta^2)
\end{displaymath}

For the high probability statement of Theorem 5, Markov's inequality implies 
$\Pr \left[ T > 2 {\rm E}[T] \right] < 1/2$. If we view (pessimistically) the simulation 
of $2  c \log n {\rm E}\left[ T \right]$ steps 
of the process $(X_t, Y_t)$ as $c \log n$ independent experiments, 
each experiment consisting of running $2{\rm E}\left[  T \right]$ steps of the process $(X_t, Y_t)$, 
the probability that they all fail gives the bound 
$
\Pr \left[ T > 2  {\rm E}\left[ T \right] c \log n \right] < \left( \frac{1}{2} \right)^{c \log n } = n^{-c}$. This completes the proof of Theorem 5. \end{proof}

\subsection{Efficient Implementation for Sampling $\epsilon$-close to $G(n,p)$}
\label{samplefast}

We remark some important considerations for implementing {\bf Sample-G}$(n,p,\epsilon)$. The algorithm first uses standard multiplication algorithms to compute $\mu \!=\! N p$ in $O(\log^2 n \max\left\{ \log n , \log p^{-1} \right\} )$ time. To find 
$
\Delta  \geq \sqrt{ 4 N \max \left\{  p ,  \frac{4}{N} \ln \left( 2/\epsilon \right) \right\} \ln \left( 2/\epsilon \right)   }
$ efficiently, the algorithm may bound from above each term inside this expression 
by the corresponding smallest power of 2, thus making the computation of the resulting logarithm and square root elementary. This allows for a suitable $\Delta$ to be computed in total  
$O(\log^2 n \max \left\{ \log n , \log p^{-1}  , \log \epsilon^{-1} \right\} )$ time.

{\bf Sample-G}$(n,p,\epsilon)$  subsequently simulates the {\bf Coupling Markov Chain} (\ref{coupleBNp}) with 
starting state as in Theorem 5 to produce 
$k$ from $B_{\Delta} (N,p)$. Each step of the simulation involves the simulation of a step of 
the {\bf Metropolis-Hastings Markov Chain} (\ref{mcBNp}). By the definition of $\alpha^+(\cdot)$ 
and $\alpha^-(\cdot)$, each step of the simulation of (\ref{mcBNp}) 
can be completed in $O( \max \left\{ \log n , \log p^{-1}      \right\}    ) $, for a total of 
$O(\log n  \max \left\{ \log n , \log p^{-1}      \right\}    ) $ to update (write) $X_t$. 
The above, combined with Theorem 5, implies that $k$ can be sampled from $B_{\Delta} (N,p)$
in expected time $O(    \Delta^2     \log n  \max \left\{ \log n , \log p^{-1}      \right\}           )$.

Notice that $ \Delta^2 = O (     N \max \left\{ p , \frac{4}{N} \ln \left( 2/\epsilon \right) \right\}   \ln \left( 2/\epsilon \right)    ) =  O (    \max\left\{ \mu , 4 \ln \left( 2/\epsilon \right) \right\}  \ln  \left( 2/\epsilon \right)   )$.
We henceforth make the assumption that $\mu \geq 4 \ln \left( 2/\epsilon \right)$ 
(or else, a naive faster algorithm can be used instead), 
thus $ \Delta^2 = O \left(   \mu  \ln  \left( 2/\epsilon \right) \right)$. 
The total running time, is 
$O(   \mu    \log n   \ln \left( 2/\epsilon \right)  \max \left\{ \log n , \log p^{-1}      \right\}    )$, including all computations,
and the running time exceeds 
$O( c \mu \log^2 n           \ln  \left( 2/\epsilon \right) \max \left\{ \log n , \log p^{-1}      \right\}  )$ with probability $O(n^{-c})$ for any $c>1$.

Finally, {\bf Sample-G}$(n,p,\epsilon)$ chooses $k$ out of $N \! = \! {n \choose 2}$ edges and outputs 
$G(V,E).$ This step can be implemented in time $O(\max \left\{ k , n  \right\} \log N)$ 
(including representation and arithmetic)
using, for example,  the classic algorithm in~\cite{bf}. 

\begin{thm} Let $\pi$ be the distribution on ${\cal G}_n$ implied by $G(n,p)$. 
Algorithm {\bf Sample-G}$(n,p,\epsilon)$ outputs 
$G(V,E) \! \in \! {\cal G}_n$ sampled from a distribution $\pi^\prime$  on ${\cal G}_n$ 
that has total variation distance from $\pi$ at most $\epsilon$. Moreover,  for all $G(V,E) \! \in \! {\cal G}_n$, $\pi^\prime$
has the following additional properties:\\
$~~~~~\bullet~~\pi^\prime (G(V,E)) \geq \pi(G(V,E)) \implies \pi^\prime (G(V,E)) = \frac{\pi (G(V,E))}{1-\epsilon}$\\
$~~~~~\bullet~~\pi^\prime (G(V,E)) < \pi(G(V,E)) \implies \pi^\prime (G(V,E)) = 0$\\
(implementing the natural Coupling from the Past modification). Also, Algorithm {\bf Sample-G}$(n,p,\epsilon)$ runs in 
$O(      \mu    \log n   \ln  \left( 2/\epsilon \right) \max \left\{ \log n , \log p^{-1}      \right\}         )$ expected running time,
including all computations, and for any $c>1$, 
the probability that the running time exceeds 
$O( \mu c \log^2 n        \ln  \left( 2/\epsilon \right) \max \left\{ \log n , \log p^{-1}      \right\}    )$ 
is $O(n^{-c})$. 
\end{thm}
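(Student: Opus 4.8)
The plan is to split the argument into a distributional part (the total-variation bound together with the two pointwise identities for $\pi'$) and a running-time part, and to reduce the distributional part from the space of graphs to the one-dimensional space of edge counts. The first thing I would record is the identity from the start of Section~\ref{gnp}: under $\pi$ the number of edges is distributed as $B(N,p)$ and, conditioned on $|E|=k$, the graph is uniform over the $\binom{N}{k}$ graphs with $k$ edges. Algorithm {\bf Sample-G}$(n,p,\epsilon)$ produces its output in exactly this two-stage form — first a count $k$, then a uniformly random $k$-subset of the $N=\binom{n}{2}$ potential edges via~\cite{bf} — so conditioned on $|E|=k$ its output is \emph{also} uniform over those $\binom{N}{k}$ graphs. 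Hence, if $\nu$ denotes the law of the count $k$ produced by the algorithm, then $\pi'(G)=\nu(|E|)/\binom{N}{|E|}$ while $\pi(G)=B(|E|;N,p)/\binom{N}{|E|}$, and because the two conditional laws coincide, the total variation distance between $\pi'$ and $\pi$ on $\mathcal{G}_n$ equals that between $\nu$ and $B(N,p)$ on the integers. This collapses the whole distributional claim to a statement about one-dimensional count distributions.

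Next I would identify $\nu$. The key input is Lemma~4: running the monotone coupling~(\ref{coupleBNp}) from the extremal states $X_0=\bar\mu+\Delta^+$, $Y_0=\bar\mu-\Delta^-$ until coalescence yields a count distributed \emph{exactly} according to $B_\Delta(N,p)$, not merely approximately, so the raw chain output is a perfect draw from the truncated binomial. From the explicit form~(\ref{Bdelta}), writing $\beta=\sum_{x\notin I}B(x;N,p)$, one has $B_\Delta(k)-B(k)=B(k)\,\beta/(1-\beta)\ge 0$ for $k\in I$ and $B_\Delta(k)=0<B(k)$ for $k\notin I$, so the variation distance between $B_\Delta$ and $B(N,p)$ is exactly $\beta$, which is $<\epsilon$ by Fact~1. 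To upgrade this to the clean pointwise identities I would invoke the ``natural Coupling from the Past modification'': fix in advance a subset of the truncated support carrying $\pi$-mass exactly $1-\epsilon$ and reject-and-resample whenever the produced graph falls outside it. The accepted law is then $\pi$ conditioned on that subset, giving $\pi'(G)=\pi(G)/(1-\epsilon)$ on the subset and $\pi'(G)=0$ off it, whence the two bullets follow and $\mathrm{TV}(\pi',\pi)=\epsilon$. I expect this to be the main obstacle: one must verify that the modification can be realized so that the surviving mass is exactly $1-\epsilon$ with uniform rescaling, and that it is cheap — the extra rejected mass is $\epsilon-\beta$, so the number of resamples is dominated by a geometric variable with failure probability $(\epsilon-\beta)/(1-\beta)$ and contributes only an $O(1)$ expected overhead.

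For the running time I would accumulate the per-stage costs tabulated in Subsection~\ref{samplefast}. Computing $\mu=Np$ and a suitable $\Delta$ (rounding each factor up to a power of two) costs $O(\log^2 n\,\max\{\log n,\log p^{-1},\log\epsilon^{-1}\})$, which is of lower order. Simulating the coupling requires, by Theorem~5, $E[T]=O(\Delta^2)$ steps in expectation, each costing $O(\log n\,\max\{\log n,\log p^{-1}\})$ to evaluate $\alpha^\pm$ and update the state; with $\Delta^2=O(\mu\ln(2/\epsilon))$ under the standing assumption $\mu\ge 4\ln(2/\epsilon)$ this gives expected time $O(\mu\log n\,\ln(2/\epsilon)\,\max\{\log n,\log p^{-1}\})$. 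The subsequent selection of $k$ edges via~\cite{bf} costs $O(\max\{k,n\}\log N)$, which is dominated: the chain lives on $I$, so $k\le\bar\mu+\Delta=O(\mu)$ deterministically and $\log N=O(\log n)$. Since the modification adds only $O(1)$ expected resamples, the expected bound follows. The high-probability bound then comes from the tail in Theorem~5, $\Pr[T>2c\log n\,E[T]]\le n^{-c}$, which supplies the extra $\log n$ factor, together with a geometric tail on the rare resamples handled by a union bound; the resample contribution is subsumed in the poly-logarithmic factors, which we do not attempt to optimize.
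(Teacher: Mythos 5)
Your argument for the two main claims is correct and follows the same route the paper intends (its own proof is only a pointer to Theorem 5, Fact 1, and the algorithm description): reduce total variation on ${\cal G}_n$ to total variation on edge counts via the shared conditional uniformity given $|E|=k$, identify the count law as $B_\Delta(N,p)$, compute ${\rm TV}(B_\Delta,B)=\sum_{x\notin I}B(x;N,p)=\beta<\epsilon$ from Fact 1, and accumulate the per-stage costs of Subsection 2.2 with $\Delta^2=O(\mu\ln(2/\epsilon))$. The one place you diverge is in reading ``(implementing the natural Coupling from the Past modification).'' You take it to license an \emph{extra} accept/reject layer that prunes the truncated support down to a set of $\pi$-mass exactly $1-\epsilon$; the paper instead means that the coupling of (4) should be run as genuine Coupling from the Past so that the sample from $B_\Delta(N,p)$ is \emph{exact} --- this is needed because Lemma 4's forward-coalescence argument evaluates the stationary copy $\widehat{Y}$ at the random time $T$, which is correlated with $\widehat{Y}$, so forward coalescence alone does not certify exact stationarity. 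With CFTP in place, $\pi'$ is simply $\pi$ conditioned on $\{|E|\in I\}$, and the two bullets hold with normalizer $1-\beta$ (the paper writes $1-\epsilon$ loosely for the actual excluded mass). Your added rejection step is not needed for any claim of the theorem, and as described it has a small flaw: a discrete distribution need not admit a subset of mass exactly $1-\epsilon$, so ``surviving mass exactly $1-\epsilon$'' cannot in general be realized. Dropping that step, and substituting the CFTP reading, your proof matches the intended one; everything else (the TV reduction, the computation ${\rm TV}(B_\Delta,B)=\beta$, the runtime bookkeeping, and the high-probability tail from Theorem 5) is right.
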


\begin{proof} Follows from Theorem 5, Fact 1, and the description of {\bf Sample-G}$(n,p,\epsilon)$
given above.\end{proof}

{\bf Remark.} There is a natural analogue to Theorem 8 for bipartite graphs. 
For integers $n_1$ and $n_2$ and $0 \! < \! p,\epsilon \! < 1$, 
let $G(n_1, n_2 , p )$ be the random bipartite graph with $n_1$ right vertices, $n_2$ left vertices, 
and edge probability between a right vertex and a left vertex $p$.
There is an algorithm {\bf Sample-G}$(n_1,n_2,p,\epsilon)$ which generates 
$G(V,E) \! \in \! {\cal G}_{n_1,n_2}$ in running time and with properties completely analogous
to those stated in Theorem 8. (In particular, all computations follow 
by replacing $n$ by $n_1+n_2$).

\section{Efficient Generation $\epsilon$-close to $G(n,\vec{w})$}
\label{gnw}

In $G(n,\vec{w})$, the input consists of $n$ real numbers $\vec{w} = w_1,w_2,...,w_n$ corresponding to a weight for each vertex. 
The probability of an edge $\{u,v\}$ is  given by $\min\{w_uw_v,1\},$ independently from all other edges.
Throughout this section, and by analogy to Section 2,
$\mu = {\rm E}[|E|]$ and $\pi$ is the distribution over all graphs on 
${\cal G}_n$ according to $G(n,\vec{w})$.
The main idea of the algorithm is to partition the vertices according to their weights, where the weights inside each partition class are within a multiplicative factor of 2.

Let $q = \max(|\log_2 \max_u\{w_u\}|,|\log_2 \min_u\{w_u\}|).$ In Phase 1, the algorithm rounds up each $w_u$ to the next power of 2, which partitions the vertices into $O(q)$ classes. Simultaneously, the ${n \choose 2}$ possible edges are partitioned according to the rounded weight of their endpoints. In Phases 2 and 3, the algorithm generates a random subgraph within each edge class independently. We observe that these random subgraphs are either in $G(n,p)$ or $G(n_1,n_2,p)$ for which we may use {\bf Sample-G}$(n,p,\epsilon')$ and {\bf Sample}-G$(n_1,n_2,p,\epsilon')$ of Section \ref{gnp}, for suitable choice of $\epsilon'$ that we shall determine. In Phase 4, the algorithm normalizes the output graph (usual accept-reject), so that each edge is sampled with probability $\min\{w_uw_v,1\}$ instead of the rounded weights.

\begin{center}
\begin{fmpage}{15cm}
{\bf Sample-G}$ (w_1,w_2,...,w_n,\epsilon)$

\hspace{0mm}  $k \leftarrow 0;$

\hspace{0mm} \textbf{\%Phase 1: Rounding}

\hspace{0mm} \textbf{for all} vertices $u$

\hspace{5mm} $w(u) \leftarrow 2^{\left\lceil\log_2(w_u)\right\rceil};$ $~~~~~i \leftarrow \log_2w(u);$

\hspace{5mm} {\bf if} $C_i = \emptyset$ {\bf then} $k \leftarrow k+1;$

\hspace{5mm} $C_i \leftarrow C_i \cup \{u\};$

\hspace{0mm} $\epsilon' \leftarrow 2\epsilon/(k(k+1));$

\hspace{0mm} \textbf{\%Phase 2: $G(n,p,\epsilon')$} 

\hspace{0mm} {\bf for all }$(C_i \ne \emptyset)$

\hspace{5mm} $E \leftarrow E~\cup$ {\bf Sample-G}$(|C_i|,\min\{{w(i)}^2,1\},\epsilon');$

\hspace{0mm} \textbf{\%Phase 3: $G(n_1,n_2,p,\epsilon')$} 

\hspace{0mm} {\bf for all }($C_i \ne \emptyset$ and $C_j \ne \emptyset$, $j > i$) 

\hspace{5mm} $E \leftarrow E~\cup$ {\bf Sample-G}$(|C_i|,|C_j|,\min\{w(i) w(j),1\},\epsilon');$ 

\hspace{0mm} \textbf{\%Phase 4: Normalization}

\hspace{0mm} {\bf foreach }($e=\{u,v\} \in E$) $E = E\setminus\{e\}~$ {\bf w.p.~} $1 - \frac{w_uw_v}{w(u)w(v)};$

\hspace{0mm} {\bf return} $(E)$;
\end{fmpage}
\end{center}

\begin{thm} 
{\bf Sample-G}$(w_1,...,w_n,\epsilon)$ generates a graph from a distribution $\pi'$ that is $\epsilon$-close to $\pi$ in expected time $O \left(\mu\log n\log (q/\epsilon) \left(\log n + q\right) + q^2\right)$. Moreover, for any constant $c$ the probability that the running time exceeds its expectation by a $2c\log n$ multiplicative factor is at most  $O(n^{-c})$.
\end{thm}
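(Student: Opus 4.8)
The plan is to analyze the algorithm phase by phase, establishing first the total variation bound and then the running time, treating correctness and complexity as two separate accounting problems that both reduce to the guarantees of Theorem~8. For correctness, I would observe that the target distribution $\pi$ factors as a product over the $O(q)$ weight classes and the $O(q^2)$ pairs of distinct classes: because edges are independent in $G(n,\vec{w})$, sampling the full graph is equivalent to sampling each monochromatic subgraph (a $G(n,p)$ instance on $C_i$) and each bichromatic subgraph (a $G(n_1,n_2,p)$ instance on $C_i \times C_j$) independently, provided the edge probabilities match. The rounding in Phase~1 replaces the true probability $\min\{w_uw_v,1\}$ by $\min\{w(i)w(j),1\}$, and Phase~4 corrects this by an accept-reject step that keeps each sampled edge with probability $w_uw_v/(w(u)w(v))$; since $w(u)=2^{\lceil \log_2 w_u\rceil}\geq w_u$, this ratio lies in $(0,1]$, so the composition of (rounded $G(n,p)$ sampling)~$\circ$~(thinning) produces exactly the correct marginal edge probability. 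The key step is then to bound the variation distance by summing the per-subgraph errors: each of the at most $k(k+1)/2$ subgraph calls is run with parameter $\epsilon' = 2\epsilon/(k(k+1))$, and since total variation is subadditive over independent product components, the accumulated error is at most $\tfrac{k(k+1)}{2}\cdot\epsilon' = \epsilon$. I would invoke Theorem~8 (and its bipartite analogue in the Remark) to certify that each call is within $\epsilon'$ of its respective target, noting that Phase~4's thinning is exact and contributes no additional error.

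For the running time, I would sum the costs of the four phases. Phase~1 touches each vertex once and, after assigning it to a class, does arithmetic on $O(\log n + q)$-bit quantities; maintaining the $O(q)$ nonempty classes and computing $\epsilon'$ costs $O(q^2)$ additively, which accounts for the isolated $+q^2$ term in the bound. Phases~2 and~3 invoke \textbf{Sample-G} on the subgraphs; by Theorem~8 each call on a subgraph with expected edge count $\mu_i$ runs in expected time $O(\mu_i \log n \ln(1/\epsilon')(\log n + \log p^{-1}))$. The main technical point is to show that these costs telescope correctly: the expected edge counts $\mu_i$ across all monochromatic and bichromatic subgraphs sum to $\mu = {\rm E}[|E|]$ (again because the edge sets partition the $\binom{n}{2}$ potential edges), so the linear-in-edges running times aggregate to $O(\mu\log n \ln(1/\epsilon')(\log n + q))$. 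Here I would bound $\log p^{-1}$ for each subgraph by $O(q)$, since the rounded weights are powers of $2$ with exponents in $[-q,q]$ so every edge probability is at least $2^{-2q}$, and I would substitute $\ln(1/\epsilon') = O(\ln(k(k+1)/\epsilon)) = O(\log(q/\epsilon))$ using $k = O(q)$. Phase~4 scans the output edge set once, costing $O(|E|)$ in expectation, i.e. $O(\mu)$, which is dominated.

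The subtlety I expect to require the most care is the high-probability statement, because the overall running time is a sum over $O(q^2)$ independent Markov-chain simulations, each of which has its own tail bound from Theorem~8 of the form $\Pr[\text{cost}_i > 2c\log n\cdot {\rm E}[\text{cost}_i]] \leq n^{-c}$. The plan is \emph{not} to union-bound over all $O(q^2)$ calls (which would degrade the exponent), but rather to apply the tail bound to the aggregate coupling-time process: since the total work is governed by the sum of the per-call coupling times and each ${\rm E}[\text{cost}_i]$ scales with $\mu_i$, a single application of the concentration argument from the proof of Theorem~5 — viewing $2c\log n\cdot {\rm E}[T]$ total steps as $c\log n$ independent blocks of $2{\rm E}[T]$ steps each — yields that the \emph{total} running time exceeds $2c\log n$ times its expectation with probability at most $n^{-c}$. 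This is the step where one must be careful that the additive $q^2$ term and the Phase~4 cost, being deterministic or already-concentrated, do not interfere with the multiplicative tail bound; I would handle them by absorbing $q^2$ into the dominant term under the standing assumption (analogous to Section~2's $\mu \geq 4\ln(2/\epsilon)$) that the graph is not pathologically sparse, so that $\mu\log n(\log n + q)$ dominates $q^2$.
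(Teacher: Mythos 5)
Your proposal follows essentially the same route as the paper: decompose the graph into the $O(q^2)$ class-pair subgraphs, invoke Theorem~8 (and its bipartite analogue) with $\epsilon' = 2\epsilon/(k(k+1))$ so the per-class errors accumulate to $\epsilon$, bound $\log p^{-1}$ by $O(q)$ and $\ln(1/\epsilon')$ by $O(\log(q/\epsilon))$, and obtain the tail bound by Markov plus independent blocks of length $2{\rm E}[T]$. The only slip is that the per-call expected edge counts sum to $\mu'$, the expectation under the \emph{rounded} weights, not to $\mu$; the paper closes this with the observation $\mu \leq \mu' \leq 4\mu$ (rounding up to a power of $2$ at most doubles each weight), which you should add, and likewise Phase~4 costs $O(q\mu')$ rather than $O(\mu)$ once bit-arithmetic is charged, though both corrections are absorbed by the stated bound.
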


\begin{proof} The normalization happening in Phase 4 ensures that the rounding in Phase 1 has no net effect on the distribution the algorithm samples from. This may not be immediately obvious, since the edges of the graph constructed at the end of Phase 3 were not the result of fully independent sampling. However, it is tedious but straightforward 
to bound the probability of a graph $G(V,E)$ being the output at the end of Phase 3. That is,
$$
\prod_{\{ u,v \} \in E} w_uw_v \prod_{\{ u,v \}  \not\in E}  (1-w_uw_v)  ~ \leq ~   \Pr [G(V,E) ] 
~ \leq ~\frac{\prod_{\{ u,v \} \in E} w_uw_v \prod_{\{ u,v \}  \not\in E}  (1-w_uw_v)  }{1 - \epsilon}
$$

Phase 1 partitions the vertices into $k$ classes. Let $\pi_{ij}$ be the probability distribution according to $G(n,\vec{w})$ over subgraphs in edge class $[C_i,C_j]$. In Phases 2 and 3, Theorem 8 guarantees that, for each $i \leq j,$ the algorithm samples from a distribution $\pi_{i,j}'$ which is $\epsilon$-close to $\pi_{ij}.$

Let $X$ be the set of all graphs from which $\pi'$ does not sample, and for each $i$ and $j$ let $X_{ij}$ be the set of subgraphs from which $\pi'_{ij}$ does not sample. To sample from $\pi',$ the algorithm samples once from each $\pi'_{ij}.$ Therefore, $G \in X$ if and only if there exists a subgraph $H$ of $G$ such that $H \in X_{ij}$ for some $i$ and $j.$  Using union bound,
\begin{eqnarray}
\pi(X) \leq \sum\limits_{i=1}^k\sum\limits_{j=i}^k \pi_{i,j}(X_{ij}) \leq \frac{k(k+1)\epsilon'}{2} = \epsilon \nonumber
\end{eqnarray}

Let $G$ be any of the possible output graphs of the algorithm, and let $H_{ij}$ be the subgraph of $G$ induced by the vertices in classes $C_i$ and $C_j.$ By Theorem 8,
\begin{eqnarray}
\pi(G) &=& \prod\limits_{i=1}^k\prod\limits_{j=i}^k~\pi_{ij}(H_{ij}) = \prod\limits_{i=1}^k\prod\limits_{j=i}^k~\pi'_{ij}(H_{ij})(1-\pi_{ij}(X_{ij}))  \leq \prod\limits_{i=1}^k\prod\limits_{j=i}^k~\pi'_{ij}(H_{ij}) = \pi'(G) \nonumber
\end{eqnarray}

Hence, $\max_{S \subseteq {\cal G}_n} |\pi(S)-\pi'(S)|$ = $\max_{S \subseteq {\cal G}_n} |\pi(S \cap X) + \pi(S \setminus X) - \pi'(S \cap X) - \pi'(S \setminus X) | \leq \epsilon.$
 
To analyze the running time, let $T_1, T_2, T_3$ and $T_4$ be the running times for each of the four phases in the algorithm. Then, $T_1 = O(qn)$. Let $w(i)$ be the rounded up weight and $|C_i| = n_i$ for each $i$. By Theorem 8,
\begin{eqnarray}
T_2 &=& O \left(\sum\limits_{i=1}^k{n_i \choose 2} \min\{w(i)^2,1\}\cdot \log n \cdot \log (\epsilon')^{-1}\left(\log n + q \right)  \right) \nonumber\\
T_3 &=& O \left( \sum\limits_{i=1}^k \sum\limits_{j=i+1}^k n_in_j \min\{w(i)w(j),1\} \cdot \log n \cdot \log (\epsilon')^{-1}\cdot\left(\log n +  q \right) \right) \nonumber
\end{eqnarray}

Thus, $T_2 + T_3 = O \left(\mu' \log n \log (\epsilon')^{-1} \left(\log n + q \right) + k^2\right)$ where $\mu'$ is the expected number of edges of the output graph using the rounded weights. Given that $T_4 = O(q \mu'),$ $\mu \leq \mu' \leq 4\mu,$ and $k=O(q),$ the total expected running time of the algorithm is $O \left(\mu \log n \log (q/\epsilon) \left(\log n + q\right)  + q^2 \right).$
 
Finally, we observe that Markov's inequality implies $\Pr \left[ T > 2 {\rm E}[ T ] \right ] \leq 1/2$. Considering the worst case where each group of $c\log n$ steps is an independent experiment, we get $\Pr \left[ T > 2c\log n {\rm E}[ T ] \right ]  = O( n^{-c})$. This completes the proof of Theorem 9. \end{proof}

\section{Efficient Generation $\epsilon$-close to $G(n,W)$}
\label{gnW}

In $G(n,W),$ the input consists of an $n \times d$ matrix $W$ containing $n$ vectors: $W = (\vec{w_1},...,\vec{w_n})$, where each vector $\vec{w_u} \in \mathbb{R}^d$ corresponds to a vertex $u.$ The probability of each edge $\{u,v\}$ is given by $\min\{\langle \vec{w_u},\vec{w_v} \rangle,1\},$ independently from all other edges. Throughout this section, and by analogy to previous sections, let $\mu = {\rm E}[|E|],$ $\pi$ be the distribution over ${\cal G}_n$ according to $G(n,W),$ and $q = \max(|\log \max_u\{||\vec{w_u}||\}|,|\log  \min_u\{\vec{||w_u||}\}|)$.

\begin{center}
\begin{fmpage}{9cm}
{\bf Sample-G}$ (W,\epsilon )$

\hspace{0mm} \textbf{\%Phase 1: Rounding}

\hspace{0mm} \textbf{for} $u = 1$ {\bf to} $n$ 

\hspace{5mm} $L(u) \leftarrow \sqrt{\sum_{k=1}^d w_{uk}^2};$

\hspace{0mm} \textbf{\%Phase 2: $G(n,\vec{L},\epsilon)$} 

\hspace{0mm} $E \leftarrow$ {\bf Sample-G}$(L,\epsilon);$

\hspace{0mm} \textbf{\%Phase 3: Normalization}

\hspace{0mm} {\bf foreach }($e=\{u,v\} \in E$) 

\hspace{5mm} $E = E\setminus\{e\}~$ {\bf w.p.~} $1 - \frac{L(u)L(v)}{\left\langle \vec{w_u},\vec{w_v} \right\rangle};$

\hspace{0mm}  {\bf return} $(E);$
\end{fmpage}
\end{center}

The main idea of the algorithm is to reduce sampling from $G(n,W)$ to repeated sampling from $G(n,\vec{w})$ using properties of the inner product to round and normalize. The reduction uses the fact that  $\langle \vec{w_u},\vec{w_v} \rangle = ||\vec{w_u}||||\vec{w_v}||cos(\varphi(u,v))$ where $\varphi(u,v)$ is the angle between vectors $\vec{w_u}$ and $\vec{w_v}.$  In Phase 1, {\bf Sample-G}$ (W,\epsilon )$ assigns to each vertex $u$ a weight equal to its length $l_u = ||\vec{w_u}||.$ In Phase 2, the algorithm calls {\bf Sample-G}$(l_1,...,l_n,\epsilon')$ with a suitable choice of $\epsilon'$ we shall determine. Notice that at this point, each edge $\{u,v\}$ has been sampled with probability $||\vec{w_u}||||\vec{w_v}||$ instead of $\left\langle \vec{w_u},\vec{w_v} \right\rangle.$ In Phase 3, the algorithm normalizes the output graph (usual accept-reject). 

\begin{thm}
{\bf Sample-G}$(W,\epsilon)$ generates a graph from a distribution $\pi'$ that is $\epsilon$-close to $\pi$ in expected time $O \left(d\mu \log n \log (q/\epsilon) (\log n + q) + q^2 \right).$ Moreover, the probability that the running time exceeds its expectation by a $2c\log n$ multiplicative factor is at most  $O(n^{-c})$.
\end{thm}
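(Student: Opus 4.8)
The plan is to mirror the structure of the proof of Theorem 9, since {\bf Sample-G}$(W,\epsilon)$ reduces the problem to a single call of {\bf Sample-G}$(L,\epsilon')$ followed by an accept--reject normalization, exactly as the weighted case reduced to repeated $G(n,p)$ and $G(n_1,n_2,p)$ calls. First I would establish correctness of the distribution. The key algebraic identity is $\langle \vec{w_u},\vec{w_v}\rangle = \|\vec{w_u}\|\,\|\vec{w_v}\|\cos\varphi(u,v) = L(u)L(v)\cos\varphi(u,v)$, so that after Phase 2 each edge $\{u,v\}$ is present with probability (approximately) $\min\{L(u)L(v),1\}$, and the Phase 3 rejection step keeps it with the conditional probability $\langle \vec{w_u},\vec{w_v}\rangle / (L(u)L(v)) = \cos\varphi(u,v)$. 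Multiplying these gives edge probability $\min\{\langle \vec{w_u},\vec{w_v}\rangle,1\}$, which is exactly the $G(n,W)$ target $\pi$. I would note that this rejection probability is a valid number in $[0,1]$ precisely because $\cos\varphi(u,v)\le 1$, guaranteeing $\langle \vec{w_u},\vec{w_v}\rangle \le L(u)L(v)$ by Cauchy--Schwarz; this is the structural fact that makes the inner-product kernel amenable to the same round-then-normalize scheme.

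For the $\epsilon$-closeness, I would invoke Theorem 9 as a black box: Phase 2 samples from a distribution $\pi'$ on $G(n,\vec{L})$ that is $\epsilon'$-close to the $G(n,\vec{L})$ target, and moreover enjoys the one-sided Coupling-from-the-Past property (each output graph either receives inflated probability $1/(1-\epsilon')$ times its true weight, or is never produced). I would carry this one-sided structure through the deterministic Phase 3 post-processing: since normalization acts independently on each surviving edge and does not depend on the internal randomness of Phase 2, the total variation distance is preserved, so choosing $\epsilon'=\epsilon$ yields a final distribution $\epsilon$-close to $\pi$. This is the step I would write carefully, since one must argue that the accept--reject filter, applied to a distribution that only over- or under-represents graphs relative to the target, cannot amplify the variation distance beyond $\epsilon'$.

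For the running time, the dominant cost is the single invocation of {\bf Sample-G}$(L,\epsilon')$ in Phase 2, to which Theorem 9 applies directly with the weight vector $\vec L$. This contributes $O(\mu'\log n\log(q/\epsilon)(\log n + q) + q^2)$, where $\mu'$ is the expected edge count under the rounded length-weights; as in Theorem 9 one has $\mu \le \mu' \le 4\mu$ up to the normalization. The extra factor of $d$ in the claimed bound arises from Phases 1 and 3: computing each length $L(u)=\sqrt{\sum_{k=1}^d w_{uk}^2}$ costs $O(d)$ arithmetic operations per vertex, and evaluating each inner product $\langle\vec{w_u},\vec{w_v}\rangle$ in Phase 3 costs $O(d)$ per surviving edge. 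Since the number of surviving edges is $O(\mu)$ in expectation, these inner-product evaluations inflate the per-edge work by a factor of $d$, giving the stated $O(d\mu\log n\log(q/\epsilon)(\log n+q)+q^2)$ expected bound. The high-probability statement follows exactly as in the previous theorems: Markov's inequality gives $\Pr[T > 2{\rm E}[T]] \le 1/2$, and treating $c\log n$ consecutive blocks as independent experiments yields $\Pr[T > 2c\log n\,{\rm E}[T]] = O(n^{-c})$.

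The main obstacle I anticipate is the bookkeeping of bit-accuracy in Phases 1 and 3, consistent with the paper's insistence that all arithmetic be accounted for. The square-root in Phase 1 is flagged in the introduction as a point to be ``bypassed,'' so I would either round $L(u)$ up to a nearby power of two (as Phase 1 of {\bf Sample-G}$(w_1,\dots,w_n,\epsilon)$ already does internally) or absorb the rounding error into $\epsilon'$; and the division $\langle\vec{w_u},\vec{w_v}\rangle/(L(u)L(v))$ in Phase 3 must be performed to enough bits that the induced sampling bias stays below the $\epsilon$ budget. Quantifying that these finite-precision operations contribute only lower-order terms to both the running time and the variation distance is the genuinely delicate part; the distributional and asymptotic-running-time arguments are otherwise routine given Theorem 9.
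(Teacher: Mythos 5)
There is a genuine gap in your running-time analysis, and it sits exactly where the paper has to do real work. You write that ``as in Theorem 9 one has $\mu \le \mu' \le 4\mu$,'' where $\mu'$ is the expected edge count of the intermediate graph produced in Phase 2 with edge probabilities $\min\{L(u)L(v),1\}$. That bound is false here. In Theorem 9 the rounding changes each weight by at most a factor of $2$, so each edge probability changes by at most a factor of $4$; but in Theorem 10 the ``rounding'' replaces $\langle \vec{w_u},\vec{w_v}\rangle$ by $\|\vec{w_u}\|\,\|\vec{w_v}\|$, and the per-edge inflation ratio is $1/\cos\varphi(u,v)$, which is unbounded (nearly orthogonal vectors give an edge that is almost never kept in Phase 3 yet is generated with non-negligible probability in Phase 2). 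Since the cost of Phase 2 is governed by $\mu'$, not $\mu$, your argument as written does not bound Phase 2 by anything in terms of $\mu$. The paper closes this gap with Lemma 11, which proves the aggregate bound $\mu' \le d\mu$ via the identity $\cos\varphi(u,v)=\sum_{k=1}^d\cos\varphi_{uk}\cos\varphi_{vk}$, the rewriting $\sum_{u,v}\langle\vec{w_u},\vec{w_v}\rangle=\sum_k\bigl(\sum_u l_u\cos\varphi_{uk}\bigr)^2$, and Jensen's inequality together with $\sum_k\cos^2\varphi_{uk}=1$ and nonnegativity of coordinates. This lemma is the actual source of the factor $d$ in the theorem statement; it is a statement about how many extra edges the length-product surrogate creates in expectation, not about arithmetic.

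Your attribution of the factor $d$ to the $O(d)$ arithmetic for computing $L(u)$ in Phase 1 and the inner products in Phase 3 is a secondary accounting point: those costs contribute additive terms of order $O(qdn)$ and $O(d\,\mu')$ respectively and do not multiply the $\log n \log(q/\epsilon)(\log n+q)$ factor, so they cannot by themselves produce the stated bound once the Phase 2 cost is (incorrectly) taken to be $O(\mu\cdot{\rm polylog})$. The correctness and $\epsilon$-closeness part of your plan is fine and matches the paper (normalization undoes the substitution of $L(u)L(v)$ for the inner product, Cauchy--Schwarz guarantees the rejection probability is in $[0,1]$, and Theorem 9 is invoked as a black box), as is the high-probability argument. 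To repair the proof you need to state and prove the inequality $\mu'\le d\mu$; without it the expected running time of Phase 2 is not controlled.
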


\begin{proof} The normalization in Phase 3 ensures that the rounding in Phase 1 has no net effect on the distribution the algorithm samples from. Thus, by Theorem 9, $\pi'$ is $\epsilon$-close to $\pi.$ In Phase 1, one could round up each $L(u)$ to an even power of two simplifying the calculation of the square root with no effect in the running time (this step was left out of the pseudocode for clarity). Therefore, Phase 1 takes $O(qdn)$ time. If $\mu'$ is the expected number of edges of the output graph prior to Phase 3, then, Theorem 9 guarantees that {\bf Sample-G}$ (W,\epsilon )$ runs in $O \left(\mu' \log n \log (q/\epsilon) (\log n + q) + q^2 \right)$ time on expectation. To complete the proof, we use the following Lemma,

\begin{lemma} $d\mu \geq \mu'$ \end{lemma}

\begin{proof} For all $u$ and $v$ for which $\langle \vec{w_u},\vec{w_v} \rangle > 1,$ the generated subgraph graph will be complete and can be obtained trivially in $O(qn\log n).$ Hence, without loss of generality, we may assume that $\langle \vec{w_u},\vec{w_v} \rangle \leq 1$ for all $u$ and $v.$

Oberve that $\mu \!=\! \sum\limits_{u=1}^n \sum\limits_{v=u+1}^n\langle \vec{w_u},\vec{w_v} \rangle$ and $\mu' \!=\! \sum\limits_{u=1}^n \sum\limits_{v=u+1}^n ||\vec{w_u}||||\vec{w_v}||.$ When $u\!=\!v,$ $\langle \vec{w_u},\vec{w_v} \rangle \!=\! ||\vec{w_u}||||\vec{w_v}||,$ thus, we can show instead,
\begin{eqnarray}
\sum\limits_{u=1}^n \sum\limits_{v=1}^n \left\langle \vec{w_u},\vec{w_v} \right\rangle \geq  \frac{1}{d} \sum\limits_{u=1}^n \sum\limits_{v=1}^n ||\vec{w_u}||||\vec{w_v}|| \nonumber
\end{eqnarray}

Let $w_{uk}$ denote the components of $\vec{w_u}$ for each $u.$ Then $w_{uk} = l_u\cos (\varphi_{uk})$ where $\varphi_{uk}$ is the angle between $\vec{w_u}$ and the $k$-th dimension's axis. From the definition of inner product follows that $\cos(\varphi(u,v)) =  \sum\limits_{k=1}^d \cos(\varphi_{uk}) \cdot \cos(\varphi_{vk}).$ Therefore,\begin{eqnarray}
\sum\limits_{u=1}^n \sum\limits_{v=1}^n \left\langle \vec{w_u},\vec{w_v} \right\rangle &=& \sum\limits_{u=1}^n \sum\limits_{v=1}^n l_ul_v \left\lgroup \sum\limits_{k=1}^d \cos(\varphi_{ik}) \cdot \cos(\varphi_{jk})  \right\rgroup  \nonumber \\
&=& \sum\limits_{k=1}^d \left\lgroup \sum\limits_{u=1}^n l_u \cdot \cos(\varphi_{uk}) \left\lgroup \sum\limits_{v=1}^n l_v \cdot \cos(\varphi_{vk}) \right\rgroup \right\rgroup = \sum\limits_{k=1}^d \left\lgroup \sum\limits_{u=1}^n l_u \cdot \cos(\varphi_{uk}) \right\rgroup^2 \nonumber 
\end{eqnarray}
Using Jensen's inequality for the first bound, and repeated triangular inequality for the second,
\begin{eqnarray}
\mu = \sum\limits_{k=1}^d \left\lgroup \sum\limits_{u=1}^n l_u \cdot \cos(\varphi_{uk}) \right\rgroup^2 \geq \frac{\left\lgroup \sum\limits_{k=1}^d  \sum\limits_{u=1}^n l_u \cdot \cos(\varphi_{uk}) \right\rgroup^2}{d} 
\geq \frac{\left\lgroup \sum\limits_{u=1}^n l_u \right\rgroup^2}{d} = \frac{\sum\limits_{u=1}^n \sum\limits_{v=1}^n l_ul_v}{d} = \frac{\mu'}{d} \nonumber
\end{eqnarray} \end{proof}

\noindent 
Thus, {\bf Sample-G}$(W,\epsilon)$ runs in $O \left(d \mu \log n \log (q/\epsilon) (\log n + q) + q^2 \right)$ expected time. The high probability statement also follows immediately from Theorem 9.\end{proof}

\section{Efficient Generation $\epsilon$-close to $G(n,\cal{P})$}
\label{K}

In $G(n,\cal{P})$, the input consists of a positive integer $k$ and a $d \times d$ ``initiator" matrix ${\cal P} \!=\! (\theta_{ij}).$ Define $K_k$ recursively using the Kronecker product of matrices \cite{leckfz},
\begin{center}
$K_k = {\cal P} \otimes K_{k-1} = \begin{pmatrix} \theta_{11}K_{k-1} & \theta_{12}K_{k-1}& \ldots & \theta_{1n}K_{k-1} \\ \vdots & \vdots & \ddots & \vdots \\ \theta_{n1}K_{k-1} & \theta_{n2}K_{k-1} & \ldots & \theta_{nn}K_{k-1} \end{pmatrix}$
\end{center}
where $K_0$ is the identity. In $G(n,\cal{P}),$ the probability of an edge $\{u,v\}$ is given by $\min \{K_k(u,v),1\},$ independently from all other edges. Throughout this section, $n = d^k$ is the number of vertices of the output graph, $\mu = {\rm E}[|E|],$ $\pi$ is the distribution over all graphs on ${\cal G}_n$ according to $G(n,{\cal P}),$ and $q = \max(|\log_2 \max_{ij}\{\theta_{ij}\}|,|\log_2 \min_{ij}\{\theta_{ij}\}|).$
 
The main idea of the algorithm is to reduce sampling from $G(n,\cal{P})$ to repeated sampling from the binomial distribution using the method in Section 2. {\bf Sample-G$(n,{\cal P},\epsilon) $} partitions the edges according to their probability of occurrence. The probability of occurrence for each edge $e$ is given by $p_e = \theta_1\theta_2...\theta_k$ where each $\theta_i$ is an entry of ${\cal P}.$ Hence, $p_e = \theta_{11}^{\alpha_{11}}\theta_{12}^{\alpha_{12}}...\theta_{dd}^{\alpha_{dd}}$ for a suitable sequence of $\alpha_{ij}$'s. The number of edges with probability of occurrence $p_e$ is $N_e = \frac{k!}{\alpha_{11}!...\alpha_{dd}!}.$ Therefore, there is a natural two-step approach to generate the random subgraph from each edge class. First sample $|E|$ from $B(N_e,p_e),$ using the method described in Section 2, and then generate a random combination of $|E|$ out of $N_e$ possible edges, using for example the classic method in .

\begin{center}
\begin{fmpage}{13cm}
{\bf Sample-G$(n,{\cal P},\epsilon) $}

\hspace{0mm} $\epsilon' \leftarrow \epsilon/k^{d^2};$

\hspace{0mm} {\bf foreach} ($(\alpha_{11},\alpha_{12},...,\alpha_{dd}) \in \mathbb{Z}^{d^2}:$ $\sum_{i=1}^d\sum_{j=1}^d \alpha_{ij} \! = \! k$ and $\alpha_{ij} \! \geq \!0$ $\forall i,j$)

\hspace{5mm} $N \leftarrow \frac{k!}{\alpha_{11}!\alpha_{12}!...\alpha_{dd}!}$;

\hspace{5mm} $p \leftarrow \prod\limits_{i=1}^d \prod\limits_{j=i}^d \theta_{ij}^{\alpha_{ij}};$

\hspace{5mm} $M \leftarrow$ {\bf Sample-B}$(N,p,\epsilon');$

\hspace{5mm} $\{c_1,...,c_M\} \leftarrow$ {\bf Random-Combination}$(N,M);$

\hspace{5mm} {\bf for} $i=1$ {\bf to} $M~~~E \leftarrow E \cup \{e_{c_i}\};$

\hspace{0mm} {\bf return} $(E);$
\end{fmpage}
\end{center}

The pseudocode may hint to ineficient arithmetic operations, but it is not the case. Establishing two suitable orders, one among edge classes and one among edges within each class,
we can find the next edge class easily in $O(d^2)$ time. Similarly, searching for the $c_i$-th edge in an edge class can be done in $O(d^4).$ Therefore, finding the edges corresponding to the random combination $\{c_1,...,c_M\}$ takes $O(d^4M)$ time. 

$N$ is initialized to $k!,$ and it is recomputed in each step with only one single multiplication operation. Computing $p$ involves $d^2$ multiplications per step, and using standard multiplication algorithms this can be done in $O(d^2k^2\log^2q)$. Therefore, we have a very crude bound of $O(d^2k^2\log^2 q+d^4M)$ for the cost of arithmetic and auxiliary operations per step, where, in practice $d$ is a very small constant \cite{leckfz}, we have not tried to optimize the exponent of $d$ for the sake of clarity.

\begin{thm}
{\bf Sample-G}$(n,{\cal P},\epsilon)$  generates a graph from a distribution $\pi'$ that is $\epsilon$-close to $\pi$ in expected time $O(d^4 \mu\log(\epsilon^{-1})\max\{\log n,\log q\} + d^2 (\log n)^{d^2+2} \log^2 q).$ Moreover, the probability that the running time exceeds its expectation by a $2c\log n$ multiplicative factor is at most  $O(n^{-c})$.
\end{thm}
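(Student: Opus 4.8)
The plan is to reduce the whole analysis to the binomial machinery of Section~2 (Theorems~5 and~8) applied independently to each edge class. The first step is the structural observation that the $\binom{n}{2}$ potential edges partition into classes indexed by the exponent vectors $(\alpha_{11},\dots,\alpha_{dd})$ with $\sum_{i,j}\alpha_{ij}=k$, $\alpha_{ij}\ge 0$: every edge $e$ has occurrence probability $p_e=\prod_{i,j}\theta_{ij}^{\alpha_{ij}}$ determined by its vector, and the class of that vector contains exactly $N_e=k!/\prod_{i,j}\alpha_{ij}!$ edges. Since edges occur independently in $G(n,\mathcal P)$, the restriction of $\pi$ to one class is \emph{exactly} ``draw the count from $B(N_e,\min\{p_e,1\})$, then output a uniform subset of that size,'' which is precisely what {\bf Sample-B} followed by {\bf Random-Combination} reproduces. (Classes with $p_e\ge 1$ give a complete subgraph obtained trivially, consistent with the $\min$.)

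For correctness I would argue class by class and then combine, mirroring the union-bound structure of the proof of Theorem~9. Within a class the second stage (the uniform $m$-subset given the count) is \emph{identical} under $\pi$ and $\pi'$, and since $\sum_{|S|=m}\binom{N_e}{m}^{-1}=1$, the induced total variation distance equals that of the two count distributions, which is at most $\epsilon'$ by Theorem~8. The classes are sampled independently, so subadditivity of total variation over the product measure gives $\|\pi'-\pi\|_{\mathrm{TV}}\le(\#\text{classes})\cdot\epsilon'$. The one quantitative step here is bounding the number of classes by the number of compositions of $k$ into $d^2$ nonnegative parts, which is at most $k^{d^2}$; with the pseudocode's choice $\epsilon'=\epsilon/k^{d^2}$ this yields $\|\pi'-\pi\|_{\mathrm{TV}}\le\epsilon$.

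For the running time I would split the cost into a ``productive'' part proportional to the edges actually generated and a per-class ``overhead'' part paid even for empty classes. Writing $\mu=\sum_e\mu_e$ with $\mu_e=N_e\min\{p_e,1\}$, and noting $N_e\le n^2$ so all Markov-chain states have bit-length $O(\log n)$, Theorem~8 bounds the expected binomial-sampling cost of class $e$ by $O(\mu_e\log n\,\log(\epsilon'^{-1})\max\{\log n,\log q\})$ once $p_e$ is truncated to the $O(\log n)$-bit accuracy dictated by $\epsilon'$, while {\bf Random-Combination} contributes $O(d^4 M_e)$. Summing over classes, using $\log(\epsilon'^{-1})=\log(\epsilon^{-1})+O(d^2\log k)=O(\log\epsilon^{-1})$ and $k=\log_d n$, collects the first term $O(d^4\mu\log(\epsilon^{-1})\max\{\log n,\log q\})$. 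The overhead is $(\#\text{classes})$ times the $O(d^2 k^2\log^2 q)$ per-class arithmetic already tallied in the text, giving $O(d^2 k^{d^2+2}\log^2 q)=O(d^2(\log n)^{d^2+2}\log^2 q)$, the second term. The high-probability bound then follows exactly as in Theorems~5,~8, and~9: Markov's inequality gives that the random (productive) time exceeds twice its expectation with probability $<1/2$, and viewing $2c\log n$ times the expectation as $c\log n$ independent blocks of $2\,{\rm E}[T]$ steps makes the failure probability $(1/2)^{c\log n}=n^{-c}$, with the deterministic overhead unaffected.

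The conceptually delicate point is the correctness combination: one must ensure the class-wise errors \emph{add} rather than multiply, i.e. invoke subadditivity of total variation across the independent product of classes, and that the count of classes is bounded by $k^{d^2}$ so that the single choice $\epsilon'=\epsilon/k^{d^2}$ suffices simultaneously for all of them. By contrast, matching the precise poly-logarithmic and $q$-factors of the running time is tedious but routine, resting only on $k=\log_d n$, on $N_e\le n^2$, and on the per-class costs already established; as the paper itself flags, these factors are deliberately left unoptimized.
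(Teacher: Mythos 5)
Your proposal is correct and follows essentially the same route as the paper: the paper's proof likewise reduces correctness to the union-bound argument of Theorem~9 over the at most $k^{d^2}$ edge classes with $\epsilon'=\epsilon/k^{d^2}$, and charges the running time as a per-class binomial-sampling cost via Theorem~5 (plus \cite{bf}) together with the $O(d^2k^2\log^2 q)$ per-class arithmetic overhead multiplied by the number of classes, with the same Markov-inequality/independent-blocks argument for the tail bound. Your write-up is in fact more explicit than the paper's (which mostly defers to Theorem~9), but there is no substantive difference in approach.
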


\begin{proof} The proof is very similar to the proof of Theorem 9. The details are in Appendix 4.\end{proof}

\pagebreak

\pagebreak

\section*{Appendix 1: Proof of Fact 3}

{\bf Fact 3.}
{\em For any starting state (or probability distribution) $X_0 \in I$, $X_t$ converges to $B_{\Delta}(N,p)$.}

\begin{proof} It is obvious that $X_t$ is ergodic.
Convergence to $B_{\Delta}(N,p)$ follows by verifying detailed balance conditions. For $k$ in the range $\bar{\mu}\!+\!1 \leq k \leq \bar{\mu}\!+\!\Delta^+$, detailed balance conditions are
equivalent to:
\begin{eqnarray*}
B(k\! - \! 1 ; N,p) \Pr [X_{t+1} \! = \! k | X_t \! = \! k \! - \! 1 ] &  =  &  
B(k\! - \! 1 ; N,p) ~\alpha^+ (k-1) \\
      & = & \frac{N!}{(k-1)! (N-k+1)!}p^{k-1}(1\! - \! p)^{N-k+1} \frac {N-k+1}{k}\frac{p}{1-p}  \\
      & = & \frac{N!}{(k)! (N-k)!}p^{k}(1\! - \! p)^{N-k} \\
      & = & B(k;N,p) \\ 
     & =  & B(k;N,p) ~ \alpha^- (k) \\
           & = &  B(k;N,p) \Pr [X_{t+1} \! = \! k \! - \! 1 | X_t \! = \! k]
\end{eqnarray*}
For  $k$ in the range $\bar{\mu}\!-\!\Delta^- \! < k \leq \bar{\mu}$, detailed balance conditions are equivalent to:
\begin{eqnarray*}
B(k; N,p) \Pr [X_{t+1} \! = \! k\! - \! 1 | X_t \! = k ] &  =  &  
B(k; N,p) ~\alpha^- (k) \\
      & = & \frac{N!}{(k)! (N-k)!}p^{k}(1\! - \! p)^{N-k} \frac{k}{N-k+1}\frac{1-p}{p} \\
 & = & \frac{N!}{(k-1)! (N-k+1)!}p^{k-1}(1\! - \! p)^{N-k+1} \\
       & = & B(\! -\! 1;N,p) \\ 
     & =  & B(k\! - \! 1;N,p)~\alpha^+(k-1) \\
           & = &  B(k\! - \! 1 ;N,p) \Pr [X_{t+1} \! =\!  k | X_t  \! = \! k \! - \! 1]
\end{eqnarray*}
This completes the proof of Fact 3. \end{proof}

\section*{Appendix 2: Proof of Lemma 6}

{\bf Lemma 6.}{\rm For some sequence  $\{\alpha_t\},$ $\label{boundZ} Z_{t} =  L - (X_{t+1} - Y_{t+1}).$}

\begin{proof} For this, we reduce the characterization of $(\ref{timeBNp})$ to Cases 1, 2 and 3 below. 
In each case, ($(\ref{bound1})$, $(\ref{bound2})$ and $(\ref{bound3})$), we show 
that $(\ref{timeBNp})$ is of the form $(\ref{processZ})$.

\noindent
{\bf Case 1:} 
If $X_t > Y_t \geq \bar{\mu}$, then $\alpha^-(X_t) \! = \! 1$, 
and letting $Y_t\! = \! k$, $X_t \! = \! k\! + \! x$, $x \! \geq \! 1$, we have 
\begin{displaymath}
(X_{t+1} - Y_{t+1}) = \left\{
\begin{array}{ll} 
(X_t-Y_t) -1     & {\bf w.p.}~~\frac{1}{4}\left( \frac{N-k}{k+1} \right) 
                                                 \frac{p}{1-p} + \frac{1}{4}    \\
(X_t-Y_t  ) +1   & {\bf w.p.}~~\frac{1}{4}\left( \frac{N-k-x}{k+x+1}\right) 
                                                    \frac{p}{1-p} + \frac{\alpha^-(Y_t)}{4}  \\
(X_t - Y_t ) +0     & {\bf w.p.}~~\frac{1}{2} - \frac{1}{4} \left( \frac{N-k}{k+1} +\frac{N-k-x}{k+x+1}  \right) 
                                               \frac{p}{1-p} + \frac{1}{4} -\frac{ \alpha^-(Y_t)}{4}\\ 
 \end{array} 
                                   \right.
\end{displaymath}
Realize that $  \frac{N-k}{k+1} > \frac{N-k-x}{k+x+1}  $ 
and $\alpha^-(Y_t) \leq 1$. 
Thus moving probability $\frac{1}{4} \left( \frac{N-k}{k+1} - \frac{N-k-x}{k+x+1}  \right)  \frac{p}{1-p} $
from the $-1$ level to the $ 0 $ level, 
and moving probability $\frac{1}{4} -\frac{ \alpha^-(Y_t)}{4} $
from the $0$ level to the $+1$ level, we may bound 
\begin{equation}
\label{bound1}
(X_{t+1} - Y_{t+1}) \leq \left\{
\begin{array}{ll} 
(X_t-Y_t) -1     & {\bf w.p.}~~\frac{1}{4}
                                                      \left( 1 +
                                                                           \left( \frac{N-k-x}{k+x+1} \right)  \frac{p}{1-p}  
                                                     \right)    \\
(X_t-Y_t  ) +1   & {\bf w.p.}~~\frac{1}{4}
                                                     \left( 1 + 
                                                                            \left( \frac{N-k-x}{k+x+1}\right)    \frac{p}{1-p} 
                                                     \right)  \\
(X_t - Y_t ) +0     & {\bf w.p.}~~\frac{1}{2}\left( 1 -\frac{N-k-x}{k+x+1}\frac{p}{1-p} \right)\\
                                                \end{array} 
                                   \right.
\end{equation}

\noindent
{\bf Case 2:} 
If $X_t > \bar{\mu} > Y_t$, then $\alpha^+(Y_t) \! = \! \alpha^-(X_t) \! = \! 1$, 
and we have 
\begin{displaymath}
(X_{t+1} - Y_{t+1}) = \left\{
\begin{array}{ll} 
(X_t-Y_t) -1     & {\bf w.p.}~~\frac{1}{2} \\
(X_t-Y_t  ) +1   & {\bf w.p.}~~\frac{\alpha^-(Y_t)}{4} + \frac{\alpha^+(X_t)}{4} \\
(X_t - Y_t )      & {\bf w.p.}~~\frac{1}{2} - 
                                                  \frac{\alpha^-(Y_t)}{4} - \frac{\alpha^+(X_t)}{4} \\ 
 \end{array} 
                                   \right.
\end{displaymath}
Moving $1/4$ probability from the $-1$ level to the $0$ level, 
and $\frac{1}{4}- \frac{\alpha^-(Y_t)}{4} - \frac{\alpha^+(X_t)}{4}$ probability from the 
$0$ level to the $+1$ level, we may bound
\begin{equation}
\label{bound2}
(X_{t+1} - Y_{t+1}) \leq \left\{
\begin{array}{ll} 
(X_t-Y_t) -1     & {\bf w.p.}~~\frac{1}{4} \\
(X_t-Y_t  ) +1   & {\bf w.p.}~~\frac{1}{4} \\
(X_t - Y_t )      & {\bf w.p.}~~\frac{1}{2} \\
\end{array} 
                                   \right.
\end{equation}

\noindent
{\bf Case 3:}
If $\bar{\mu} \geq X_t > Y_t$, then $\alpha^+(Y_t) \! = \! 1$, 
and letting $Y_t\! = \! k$, $X_t \! = \! k\! + \! x$, $x \! \geq \! 1$, we have 
\begin{displaymath}
(X_{t+1} - Y_{t+1}) = \left\{
\begin{array}{ll} 
(X_t-Y_t) -1     & {\bf w.p.}~~\frac{1}{4}
                                          \left( \frac{k+x}{N-k-x+1} \right)    \frac{1-p}{p} + \frac{1}{4}    \\
(X_t-Y_t  ) +1   & {\bf w.p.}~~\frac{1}{4}\left( \frac{k}{N-k+1}\right) 
                                                    \frac{1-p}{p} + \frac{\alpha^+(X_t)}{4}  \\
(X_t - Y_t )  +0    & {\bf w.p.}~~\frac{1}{2} - \frac{1}{4} \left( \frac{k+x}{N-k-x+1} +  \frac{k}{N-k+1}  \right) 
                                               \frac{1-p}{p} + \frac{1}{4} - \frac{\alpha^+(X_t)}{4}\\ 
 \end{array} 
                                   \right.
\end{displaymath}
Realize that $ \frac{k+x}{N-k-x+1} >  \frac{k}{N-k+1} $ and $\alpha^+(X_t) \leq 1$. 
Thus moving probability $\frac{1}{4} \left(   \frac{k+x}{N-k-x+1} -  \frac{k}{N-k+1} \right)  \frac{1-p}{p} $
from the $-1$ level to the $ 0 $ level, 
and moving probability $\frac{1}{4} -\frac{ \alpha^+(X_t)}{4} $
from the $0$ level to the $+1$ level, we may bound 
\begin{equation}
\label{bound3}
(X_{t+1} - Y_{t+1}) \leq \left\{
\begin{array}{ll} 
(X_t-Y_t) -1     & {\bf w.p.}~~\frac{1}{4}  \left( 1 + \left( \frac{k}{N-k+1} \right) 
                                                 \frac{1-p}{p}   \right) \\
(X_t-Y_t  ) +1   & {\bf w.p.}~~\frac{1}{4}    \left( 1 + \left(\frac{k}{N-k+1} \right) 
                                                    \frac{1-p}{p} \right) \\
(X_t - Y_t ) +0     & {\bf w.p.}~~\frac{1}{2}\left( 1 -\frac{k}{N-k+1}\frac{1-p}{p} \right)\\
                                                \end{array} 
                                   \right.
\end{equation}\end{proof}

\section*{Appendix 3: Proof of Lemma 7}

{\bf Lemma 7.}{\rm 
$\max\limits_{
\begin{array}{c}
\alpha_{t} \in {\cal A}
\end{array}
}
E [ \min_t \{ Z_t \! = \! L \}] \leq 2(L+1)^2$}

\begin{proof}
The proof is a suitable adaptation of the proof for random walks on the integer line with reflecting barrier at zero. For $k \geq 0$, we argue inductively that
\begin{equation}
\label{boundstepZ}
f(k+1) = 
\max_{
\begin{array}{c}
\alpha_{\tau} \in {\cal A}
\end{array}
}
\max_{t^\prime} ~
E \left[ \min_t \{  Z_{t^\prime+t} = k+1 | Z_{t^{\prime}} = k  \}    \right ] \leq 4(k+1)
\end{equation}
The base case $f(1)\!\leq\! 4$ is obvious. 
For the inductive step, the recursive definition in (\ref{processZ}) and the definition of $f(k+1)$ in (\ref{boundstepZ}) imply
that for some $0 \! \leq \! \alpha \! \leq \! 1$
$$
f(k+1)    \leq   \frac{1}{4}\left( 1 + \alpha \right)
              + \frac{1}{2}\left(   1 - \alpha  \right) \left( 1 +   f(k+1)  \right)  
              + \frac{1}{4}\left( 1 + \alpha \right) \left( 1 +  f(k) + f(k+1) \right)
$$
or equivalently, 
\begin{eqnarray*}
\frac{1}{4}\left( 1 + \alpha \right) f(k+1)   &\leq&  1 + \frac{1}{4}\left( 1 + \alpha \right)  f(k)\\
f(k+1) & \leq & \frac{4}{1 + \alpha } + f(k) \\
           & \leq & 4 + f(k)  \\
           & \leq & 4 + 4k ~~~~~~~~~~~~~~~~~~~~~\mbox{(by the inductive hypothesis)}\\
           & = & 4(k+1)
\end{eqnarray*}
thus establishing \ref{boundstepZ}. Combining (\ref{boundEZ}) and (\ref{boundstepZ}) we get 

\begin{eqnarray}
\max\limits_{
\begin{array}{c}
\alpha_{\tau} \in {\cal A}
\end{array}
}
E [ \min_t \{ Z_t \! = \! (\Delta^+ \! + \! \Delta^- )  ]  
&\leq& 
 \sum\limits_{k=1}^{\Delta^+ \! + \! \Delta^-} 
\max\limits_{
\begin{array}{c}
\alpha_{\tau} \in {\cal A}
\end{array}
}
\max\limits_{t^\prime} ~
E \left[ \min_t \{  Z_{t^\prime+t} = k+1 | Z_{t^{\prime}} = k  \}    \right ] \\
&\leq& \sum_{k=1}^{\Delta^+ \! + \! \Delta^-} f(k) 
\leq 4 \frac{(\Delta^+ \! + \! \Delta^-) (\Delta^+ \! + \! \Delta^- +1)}{2} \\
\label{boundEZfinal}
&\leq& 2(\Delta^+ \! + \! \Delta^- \! + \! 1 )^2
\end{eqnarray} \end{proof}

\section*{Appendix 4: Proof of Theorem 12}

{\bf Theorem 12.} {\em
{\bf Sample-G}$(n,{\cal P},\epsilon)$  generates a graph from a distribution $\pi'$ that is $\epsilon$-close to $\pi$ in expected time $O\left(d^4 \mu\log(\epsilon^{-1})\max\{\log n,\log q\} + d^2 (\log n)^{d^2+2} \log^2 q\right)$ Moreover, the probability that the running time exceeds its expectation by a $2c\log n$ multiplicative factor is at most  $O(n^{-c})$.}

\begin{proof} Taking $\epsilon' = \epsilon/k^{r},$ we may follow the exact same steps as in the proof of Theorem 9 to show that $\pi'$ is $\epsilon$-close to $\pi.$ In each step, {\bf Sample-B}$(N,p)$ and {\bf Random-Combination}$(N,M)$ are called for a total expected running time of $O\left( \mu\log(k^{d^2}/\epsilon')\max\{\log n,\log q\}\right),$ using Theorem 5 and the method to sample random combinations discussed in \cite{bf}. The cost of arithmetic and auxiliary operations per edge class is $O(d^4k^2\log^2 q),$ and there are $k^{d^2}$ classes with $k=O(\log n)$. Therefore, the overall running time of the algorithm is $O\left(d^4 \mu\log(\epsilon'^-1)\max\{\log n,\log \theta\} + d^4 (\log n)^{d^2+2} \log^2 q\right)$. Notice that some $\log\log n$ factors are omitted, and, as mentioned before, $d$ is a very small constant. The proof of the high probability statement for the running time goes exactly as in Theorem 9. \end{proof}

\end{document}